\newcommand{\imineq}[2]{\vcenter{\hbox{\includegraphics[scale=0.3]{network001}}}}
\newcommand{\motif}[2]{\vcenter{\hbox{\includegraphics[scale=0.3]{motif001}}}}
\def\BState{\State\hskip-\ALG@thistlm}
\theoremstyle{definition}
\newtheorem{remark}{Remark}[section]
\newcolumntype{d}[1]{D{.}{.}{#1}}
\theoremstyle{plain}
\newtheorem{theorem}{Theorem}[section]
\newtheorem{proposition}[theorem]{Proposition}
\newcommand{\tr}{\operatorname{tr}}
\begin{document}

\singlespace

\title{Cliques and a New Measure of Clustering:\\
with Application to U.S. Domestic Airlines
}

\author{
{Steve Lawford$^\dagger$ and Yll Mehmeti}\\[2pt]
Data, Economics and Interactive Visualization (DEVI) group, ENAC (University of Toulouse),\\7 avenue Edouard Belin, CS 54005, 31055, Toulouse, Cedex 4, France\\
$^\dagger$Corresponding author. Email: steve.lawford@enac.fr
}

\date{}

\singlespace

\maketitle

\begin{abstract}

    \noindent We propose a higher-order generalization of the well-known overall clustering coefficient for triples $C(3)$ to any number of nodes. We give analytic formulae for the special cases of three, four, and five nodes and show that they have very fast runtime performance for small graphs. We discuss some theoretical properties and limitations of the new measure, and use it to provide insight into dynamic changes in the structure of U.S. airline networks.
    
\end{abstract}

\section{Introduction}
Complex networks are widely used to describe important systems, with applications to biology, technology and infrastructure, and social and economic relationships \cite{albert_barabasi02, amaral_ottino04, costa_etal07, newman03, strogatz01, zou_etal19}. A network or ``graph'' involves a set of nodes or ``vertices'' that are linked by edges. For example, an airline company's transportation of passengers can be thought of as a network of airports (nodes) joined by routes that have regular service (edges). The statistical physics and graph theory communities have focused in particular on the topology and dynamics of random and real-world networks, and have been successful in identifying robust structural features and organizational principles.\footnote{The field is continually expanding and is far too large to survey here. We thank one anonymous referee for pointing us towards recent work on edge prediction \cite{zhu_etal14} and multiplex models \cite{zhu_etal18, zhu_etal19b}.} These include the small-world property, characterized by systems that are highly clustered but have short characteristic path lengths; and scale-free networks, which means that the number of neighbours of a node, or its ``degree'', follows a power-law distribution whereby the topology of the system is dominated by a few high degree nodes \cite{barabasi_albert99, cimini_etal19, watts_strogatz98}.

One common property of networks is clustering or ``transitivity'', which measures the relative frequency with which two neighbours of a given node are also neighbours of one another, forming a connected triangle of nodes. Many real-world networks display higher levels of clustering than would be expected if those networks were random, with nodes creating tightly connected groups \cite{albert_barabasi02, newman01c, strogatz01, watts_strogatz98}. Clustering is especially important in economic and social networks, and there is strong evidence that it is related to cooperative social behaviour and beneficial information and reputation transfer \cite{jackson08, jackson14, jackson_etal17, newman03}. Other recent examples of the empirical application of graphs in economics include \cite{banerjee_etal13,faris_felmlee11,jackson14} (social networks) and \cite{akbas_etal16,cohen-cole_etal14,el-khatib_etal15,hochberg_etal07,robinson_stuart04} (financial networks).

In this paper, we focus our empirical application on air transportation. Recent work has considered air cargo networks \cite{bombelli_etal20, malighetti_etal19}, the world-wide airport network \cite{cheung_etal20, guimera_amaral04, guimera_etal05, lordan_sallan19, ryczkowski_etal17, verma_etal14}, and airline networks in the U.S. \cite{aguirregabiria_ho12, baumgarten_etal14, gautreau_etal09, lin_ban14, roucolle_etal20, roucolle_etal20b, wuellner_etal10}, Europe \cite{reggiani_etal10}, and China \cite{chen_etal20, du_etal16}; good surveys of research in this area appear in \cite{lordan_etal14, roucolle_etal20, zanin_lillo13}. Typically, these papers report a selection of summary statistics to capture global or local aspects of the network, and provide insight into topology and dynamics that would not be available from  other methods.

One widely used measure of clustering is the \emph{overall clustering coefficient} or ``transitivity'' which is defined in \cite{barrat_weigt00, newman03, newman03d, newman09, newman_etal01} as, in our notation,
\begin{equation}\label{eq:clustering}
    C(3) = \frac{3 \times \textrm{number of triangles in the network}}{\textrm{number of connected triples of vertices}},
\end{equation} 
where a \emph{connected triple} is a set of three distinct nodes $u$, $v$ and $w$, such that at least two of the possible edges between them exist. In a social network this measures how often an individual's ``friends'' are also friends with one another, on average, across the entire network. An alternative measure of clustering, the \emph{average clustering coefficient}, takes a different approach to (\ref{eq:clustering}) and is computed locally for each node, and then averaged across all nodes.\footnote{Overall clustering (\ref{eq:clustering}) assigns the same weight to every triangle in the graph. Average clustering gives each node the same weight. Since high-degree nodes may be adjacent to more triangles than low-degree nodes, overall and average clustering can give different values.} We focus on overall clustering (transitivity) in this paper.

There is now substantial evidence that significant topological structures (known as ``graphlets'', ``motifs'' or ``subgraphs''), on more than three nodes, can be found in real-world networks, and that they may perform precise specialized functions \cite{agasse-duval_lawford18, benson17, benson_etal16, bounova09, jeong_etal00, milo_etal02, yaveroglu_etal14}. Since the usual clustering coefficient $C(3)$ is based upon connected triples of nodes, it is natural to ask whether a similar measure can be derived for any number of nodes. A generalized clustering coefficient could potentially identify hidden higher-order clustering and enable a better understanding of the structure of real-world networks. The need to go beyond usual three-node \emph{average} clustering has been addressed by \cite{caldarelli_etal04} (cycles of length four), who find that ``grid clustering'' scales with node degree in a similar way to usual clustering, \cite{fronczak_etal02} (shortest paths of length greater than one between a node's neighbours), who confirm the absence of clustering in Barab{\'a}si-Albert preferential attachment (scale-free) models, and \cite{jiang_claramunt04} (connectivity of more than one of a node's nearest neighbours), who investigate scaling properties using data on urban street networks.

The work that is most closely related to the present paper is by Yin-Benson-Leskovec \cite{yin_etal18}, hereafter referred to as YBL, who propose new overall and average clustering coefficients based on the relative frequency of cliques of order greater than three. Our work differs essentially from YBL's \emph{overall} coefficient in the way that we define the ``relative frequency'', but also in the methods that are used for computation, and the motivation and empirical application. We draw careful comparisons between our method and YBL, for theoretical Erd{\H{o}}s-R{\'{e}}nyi random graphs and simulated small-world models, and argue that the two approaches are complementary.\footnote{We became aware of the excellent \cite{yin_etal18} after the original version of our paper had been completed and submitted to the arXiv repository. The present paper contains substantial new material to address this omission. There is related work by YBL-Gleich \cite{yin_etal17} and by YBL \cite{yin_etal19}.}

In this paper, we make the following specific contributions:
\begin{itemize}

    \item We propose a new generalized clustering coefficient $C(b)$, based upon connected groups of $b$ nodes, which nests the standard clustering coefficient $C(3)$. We develop a very fast analytic implementation for connected groups of three, four and five nodes, that we show to be up to 2,000 times faster than a na{\"i}ve nested loop algorithm, for some small dense graphs.
    
    \item We examine the theoretical properties of $C(b)$ for Erd{\H{o}}s-R{\'{e}}nyi and small-world random graphs, and lollipop graphs, and draw comparisons with YBL. We show that it will become increasingly difficult to compute $C(b)$ efficiently as $b$ becomes large, even using analytic formulae. Using dynamic data on U.S. airline networks, we also observe that $C(b)$ can be highly correlated across $b$, and with network density. When we control for lower-order clustering, we find low to moderate higher-order clustering in these networks. It is not known whether this finding holds generally for large classes of networks.
    
\end{itemize}
All of the analytic formulae that we use, and several proofs, are collected in Appendix \ref{sec:proofs}, and additional figures and tables are reported in Appendix \ref{sec:correlations}.

\section{Graph Theory and Clustering}
We briefly review some relevant tools of graph theory. Important monographs include \cite{diestel17} (mathematics), \cite{jackson08} (economics of social networks) and \cite{jungnickel08} (algorithms). A \emph{graph} is an ordered pair $G = (V, E)$ where $V$ and $E$ denote the sets of \emph{nodes} and \emph{edges} of $G$, respectively. We use $n=|V|$ and $m=|E|$ to represent the numbers of nodes and edges of $G$. A graph has an associated $n \times n$ \emph{adjacency matrix} $g$, with representative element $(g)_{ij}$ that takes value one when an edge is present between nodes $i$ and $j$, and zero otherwise. We also use $(i, j) \in E$ to denote an edge between nodes $i$ and $j$, and say that they are \emph{directly-connected}. A graph is \emph{simple and unweighted} if $(g)_{ii} = 0$ (no self-links) and $(g)_{ij} \in \{0, 1\}$ (no pair of nodes is linked by more than one edge, or by an edge with a weight that is different from one). A graph is \emph{undirected} if $(g)_{ij} = (g)_{ji}$. A \emph{walk} between nodes $i$ and $j$ is a sequence of edges $\{(i_{r}, i_{r+1})\}_{r=1,\ldots,R}$ such that $i_{1} = i$ and $i_{R+1} = j$, and a \emph{path} is a walk with distinct nodes. A graph is \emph{connected} if there is at least one path between any pair of nodes $i$ and $j$; otherwise the graph is \emph{disconnected}. A \emph{bridge} is an edge the removal of which will disconnect the graph. In this paper, we consider simple, unweighted, undirected and connected graphs.

The \emph{degree} $k_i = \sum_{j}(g)_{ij}$ is the number of nodes that are directly-connected to node $i$, and the \emph{(1-degree) neighbourhood} of node $i$ in $G$, denoted by $\Gamma_{G}(i) = \{j: (i, j) \in E\}$, is the set of all nodes that are directly-connected to $i$. The \emph{density} $d(G) = 2 m / n (n-1)$ is the number of edges in $G$ relative to the maximum possible number of edges in a graph with $n$ nodes. A graph $G' = (V', E')$ is a \emph{subgraph} of $G$ if $V' \subseteq V$ and $E' \subseteq E$ where $(i, j) \in E'$ implies that $i, j \in V'$. A \emph{tree} is a connected graph with no cycles. A \emph{spanning tree} on a connected $G$ is a connected subgraph with nodes $V$ and the minimum possible number of edges $m=n-1$. A \emph{complete} graph on $n$ nodes, $K_{n}$, has all possible edges, and a complete subgraph on $b$ nodes is called a $b$\emph{-clique}. A \emph{maximal clique} is a clique that cannot be made larger by the addition of another node in $G$ with its associated edges, while preserving the complete-connectivity of the clique. A \emph{maximum clique} is a (maximal) clique of the largest possible size in $G$, and the \emph{clique number} $w(G)$ of the graph $G$ is the number of nodes in a maximum clique in $G$.

Let $G(n, p)$ be an Erd{\H{o}}s-R{\'{e}}nyi random graph with nodes $V = \{1,\ldots,n\}$ and edges that arise independently with a constant edge-formation probability $p$, giving a statistically homogeneous network that has, on average, $(n-1) \, p$ edges for a given node, and $\binom{n}{2} \, p$ randomly-distributed edges in total. We also use the lollipop graph $L(b, n-b)$, with $n$ nodes and $\frac{1}{2}b(b-3)+n$ edges, and $3 \leq b \leq n$. The lollipop can be thought of as a $b$-clique $K_b$ that is attached by a bridge to a path graph on $n-b$ nodes. Note that $L(n, 0)$ is the complete graph $K_n$.\footnote{The lollipop was first introduced by \cite[Example 2]{lawler86} in the one parameter case $b = n/2$, and was generalized to two parameters by \cite{brightwell_winkler90}. It has applications in the fields of combinatorics (Ramsey theory) \cite{fox08} and linear algebra (spectral theory) \cite{boulet_jouve08, haemers_etal08}.} Using the notation of \cite{agasse-duval_lawford18}, we refer to particular topological subgraphs by $M_{a}^{(b)}$, where $b$ is the number of nodes in the subgraph, and $a$ is the decimal representation of the smallest binary number derived from a row-by-row reading of the upper triangles of each adjacency matrix $g$ from the set of all topologically-identical subgraphs on the same $b$ nodes (also see \cite{lawford20}).

\subsection{Analytic formulae for a generalized clustering coefficient}

    The clustering coefficient $C(3)$ is bounded by $0 \leq C(3) \leq 1$, attaining the minimum when there are no triangles in the graph, and taking the maximum value for a complete graph $K_n$. Since each triangle contains three triples of nodes, a factor of three appears in the numerator of (\ref{eq:clustering}). A na{\"i}ve algorithm that is based on nested loops, and considers every distinct triple of nodes in $G$, will run in $O(n^{3})$ time. However, it is easy to write down an analytic version of $C(3)$, using the nested subgraph enumeration formulae in \cite[equations (1) and (2)]{agasse-duval_lawford18}:

    \begin{equation}\label{eq:c_3_analytic}
        C(3) = \frac{3 \, |M_{7}^{(3)}|}{|M_{3}^{(3)}|} = \frac{\tr(g^{3})}{\sum_{i}k_{i}(k_{i} - 1)},
    \end{equation}
    and where $C(3)$ makes explicit the definition of clustering in terms of triples $M_{3}^{(3)}$ and triangles $M_{7}^{(3)}$.
    
    If we instead interpret (\ref{eq:c_3_analytic}) as the average probability that any three connected nodes in a graph are also completely-connected, then a natural generalization follows to any number $b$ of nodes, such that $3 \leq b \leq n$. In this paper, we define the \emph{generalized clustering coefficient} as follows:
    
    \begin{equation}\label{eq:c_b}
        C(b) = \frac{a(b) \times \textrm{number of $b$-cliques in $G$}}{\textrm{number of $b$-spanning trees in $G$}},
    \end{equation}
    where \emph{Cayley's formula} $a(b) = b^{b-2}$ gives the number of spanning trees in $K_{b}$, and ensures that $0 \leq C(b) \leq 1$. Clearly, $C(b)$ nests $C(3)$, and equals zero if and only if there are no $b$-cliques in the graph. It is natural that the generalized clustering coefficient should attain its maximum value for a complete graph, in the same way as $C(3)$, and we show this in:
    \begin{proposition}\label{thm:case_c(b)=1}
        Let $G$ be a connected graph with at least $b$ nodes $(b\geq 3)$. Then $C(b) =1$ if and only if G is complete.
    \end{proposition}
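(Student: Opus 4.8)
The plan is to prove the two directions of the biconditional separately, with the "only if" direction being the substantive one.

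For the "if" direction, suppose $G$ is complete, i.e. $G = K_n$ with $n \geq b$. Then every set of $b$ nodes induces a $b$-clique, so the number of $b$-cliques is $\binom{n}{b}$. Likewise, every subset of $b$ nodes, viewed together with all $\binom{b}{2}$ edges among them, is a complete graph $K_b$, which has exactly $a(b) = b^{b-2}$ spanning trees by Cayley's formula; and a $b$-node spanning tree in $G$ (a spanning tree of some $b$-node connected subgraph) determines its $b$-node vertex set. Counting $b$-spanning trees by first choosing the $b$ vertices and then a spanning tree on them gives $\binom{n}{b} \cdot b^{b-2}$. Hence $C(b) = a(b)\binom{n}{b} / \bigl(\binom{n}{b} b^{b-2}\bigr) = 1$.

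For the "only if" direction, I would argue by contrapositive: assume $G$ is connected with $n \geq b$ nodes but not complete, and show $C(b) < 1$. The key observation is that $C(b) = 1$ forces equality in the count "(number of $b$-cliques)$\times a(b) = $ (number of $b$-spanning trees)". In general there is always an inequality in one direction: every $b$-spanning tree sits inside the subgraph induced on its $b$ vertices, and if those $b$ vertices happen to span a $b$-clique they contribute exactly $a(b) = b^{b-2}$ trees (Cayley), whereas if they do not induce a clique they contribute strictly fewer than $b^{b-2}$ spanning trees — indeed strictly fewer, since $K_b$ is the unique graph on $b$ labelled vertices with the maximum number $b^{b-2}$ of spanning trees, and any proper subgraph of $K_b$ has strictly fewer (removing an edge cannot increase, and by the deletion–contraction / Matrix–Tree argument strictly decreases, the spanning-tree count). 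Summing over all $\binom{n}{b}$ vertex $b$-subsets, the number of $b$-spanning trees is at most $a(b)$ times the number of $b$-cliques, with equality if and only if every $b$-subset of vertices either induces a $b$-clique or induces a subgraph with no spanning tree at all — i.e. a disconnected subgraph. So $C(b) = 1$ iff every $b$-element subset of $V$ is either complete or disconnected (as an induced subgraph).

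The final step is to show that, for a connected $G$ on $n \geq b$ vertices, this dichotomy "every induced $b$-subgraph is complete or disconnected" already forces $G = K_n$. Suppose not; then there exist two non-adjacent vertices $u, v$. Since $G$ is connected, take a shortest $u$–$v$ path; it has length $\geq 2$, so it contains three consecutive vertices $x, y, z$ with $(x,y), (y,z) \in E$ but $(x,z) \notin E$ (a path $M_3^{(3)}$ on $\{x,y,z\}$). Now extend $\{x,y,z\}$ to a $b$-element set $S$: this is where connectivity of $G$ is used, since I can repeatedly add a vertex adjacent to the current set (the neighbourhood of any proper nonempty vertex subset of a connected graph is nonempty), keeping the induced subgraph on $S$ connected. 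Then the induced subgraph $G[S]$ is connected but not complete (it misses the edge $(x,z)$), contradicting the dichotomy. Hence $G$ must be complete, completing the contrapositive.

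The main obstacle is the middle step: establishing cleanly that $K_b$ uniquely maximizes the number of spanning trees among all graphs on $b$ labelled vertices, and that every proper subgraph has strictly fewer — equivalently that a non-complete connected graph on $b$ vertices has strictly fewer than $b^{b-2}$ spanning trees. This is standard (it follows from deletion–contraction, or from the Matrix–Tree theorem together with strict monotonicity of the relevant eigenvalues under edge addition), but it is the one place where a genuine inequality, rather than bookkeeping, is needed; the rest is careful counting and an easy connectivity/extension argument.
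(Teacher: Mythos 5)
Your proof is correct and follows essentially the same route as the paper: the ``if'' direction by Cayley counting over the $\binom{n}{b}$ vertex subsets, and the ``only if'' direction by contrapositive, exhibiting a connected, non-complete induced subgraph on $b$ nodes whose spanning tree is then not enclosed in any $b$-clique (your shortest-path-plus-extension argument is a more careful justification of a step the paper merely asserts). Two small remarks. First, you write that the number of $b$-spanning trees is ``at most'' $a(b)$ times the number of $b$-cliques; it should be ``at least,'' since clique subsets contribute exactly $a(b)$ trees and non-clique subsets contribute a nonnegative number --- but the equality condition you then state (every $b$-subset induces either a clique or a disconnected subgraph) is the correct one for that inequality, so the slip does not propagate. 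Second, the fact you flag as the ``main obstacle'' --- that $K_b$ uniquely maximizes the spanning-tree count and every proper connected subgraph has strictly fewer than $b^{b-2}$ spanning trees --- is not actually needed anywhere: the argument only requires that a non-clique subset contributes $\geq 1$ spanning tree when its induced subgraph is connected and $0$ when it is disconnected, which is exactly what the paper uses.
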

    \noindent See Appendix \ref{sec:proofs} for a proof of Proposition \ref{thm:case_c(b)=1}.
    
    A na{\"i}ve algorithm for (\ref{eq:c_b}), based on nested loops, will run in $O(n^{b})$ time. For example, the denominator of (\ref{eq:c_b}) can be calculated by considering every distinct set of $b$ nodes in $G$, and counting the number of spanning trees on each subgraph. This will be excessively slow. If we instead think of $C(b)$ as a measure of the prevalence of $b$-cliques relative to all connected groups of $b$ nodes, then it is clear that we can use analytic subgraph enumeration for counting the cliques and the spanning trees for the special cases $C(4)$ and $C(5)$, in the same way as for (\ref{eq:c_3_analytic}):
    
    \begin{equation}\label{eq:c_4_analytic}
        C(4) = \frac{16 \, |M_{63}^{(4)}|}{|M_{11}^{(4)}| + |M_{13}^{(4)}|} = \frac{4 \, \sum_{i}\tr(g_{-i}^{3})}{\sum_{i}k_{i}(k_{i} - 1)(k_{i} - 2) + 6 \, \sum_{(i, j) \in E}(k_{i} - 1)(k_{j} - 1) - 3 \, \tr(g^{3})}.
    \end{equation}

    \begin{equation}\label{eq:c_5_analytic}
        \begin{split}
            C(5) &= \frac{125 \, |M_{1023}^{(5)}|}{|M_{75}^{(5)}| + |M_{77}^{(5)}| + |M_{86}^{(5)}|} \\
                 &= \frac{25 \, \sum_{i} \sum_{j \in \Gamma_{G}(i)}\tr(((g_{-i})_{-j})^{3})}{\splitfrac{\sum_{i}k_{i}(k_{i} - 1)\{(k_{i} - 2)(k_{i} - 15) - 24\} + 12 \, \sum_{(i,j) \in E}(k_{i} - 1)(k_{i} + k_{j} - 8)(k_{j} - 1)}{- 48 \, \sum_{i}(g^{3})_{ii}(k_{i} - 2) + 12 \, \sum_{i \neq j}(g^{4})_{ij} - 12 \, \tr(g^{3})}}.
        \end{split}
    \end{equation}
    The numerator terms $|M_{63}^{(4)}|$ and $|M_{1023}^{(5)}|$ are the number of 4-cliques and 5-cliques respectively. The denominator terms are the counts of the 4-star ($|M_{11}^{(4)}|$), the 4-path ($|M_{13}^{(4)}|$), the 5-star ($|M_{75}^{(5)}|$), the 5-arrow ($|M_{77}^{(5)}|$), and the 5-path ($|M_{86}^{(5)}|$), which are illustrated in Figures \ref{fig:connected_four_nodes} and \ref{fig:5_node_subgraph_notation}. Since there are sixteen possible spanning trees on any given four nodes in the graph, all of which will occur in $K_{4}$, the factor $a(4)$ equals 16. Similarly, counting the distinct 5-spanning trees in $K_{5}$ gives $a(5)$ equal to 125. We do not recommend using the right-hand-sides of (\ref{eq:c_4_analytic}) and (\ref{eq:c_5_analytic}) for computation. We report the runtime performance of the analytic formulae in Appendix \ref{sec:runtime} for small graphs.
    
    \begin{figure}\centering
        	\includegraphics[scale=0.15]{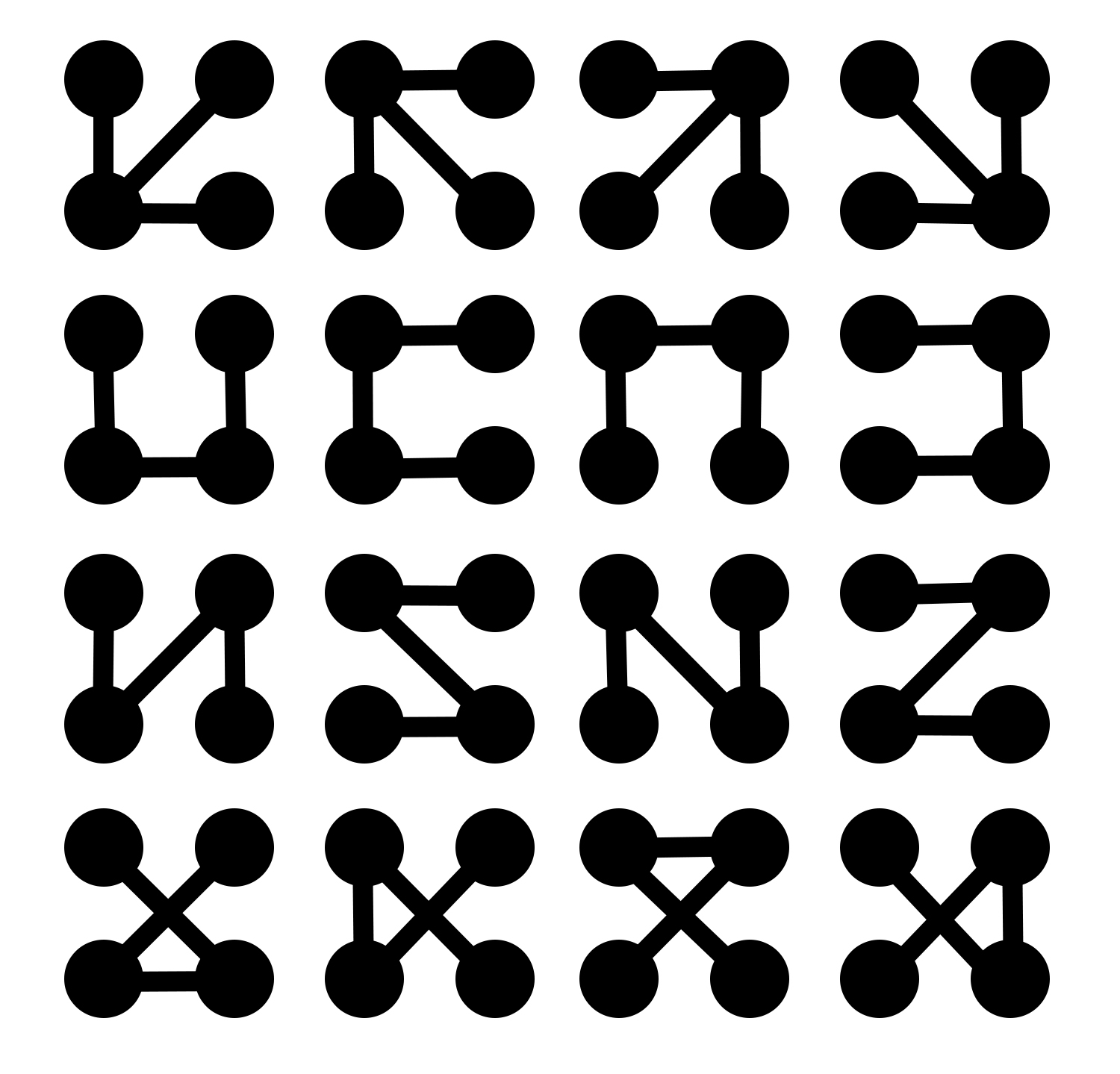}
        	\caption{The sixteen spanning trees on four labelled nodes: four 4-stars $M_{11}^{(4)}$ and twelve 4-paths $M_{13}^{(4)}$. This illustrates Cayley's formula $a(b) = b^{b-2}$, which appears in the numerator of the generalized clustering coefficient, for $b=4$.}
        	\label{fig:connected_four_nodes}
    \end{figure}

    \begin{figure}\centering
    	\begin{subfigure}{0.20\textwidth}
    		\centering
    		\includegraphics[width=.5\linewidth]{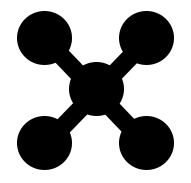}
    		\caption{5-star $M_{75}^{(5)}$.}
    		\label{fig:m_75_5}
    	\end{subfigure}
    	\begin{subfigure}{0.20\textwidth}
    		\centering
    		\includegraphics[width=.5\linewidth]{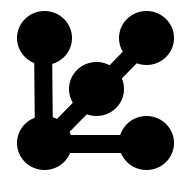}
    		\caption{5-arrow $M_{77}^{(5)}$.}
    		\label{fig:m_77_5}
    	\end{subfigure}
    		\begin{subfigure}{0.20\textwidth}
    		\centering
    		\includegraphics[width=.5\linewidth]{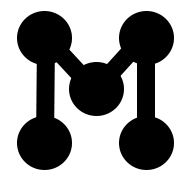}
    		\caption{5-path $M_{86}^{(5)}$.}
    		\label{fig:m_86_5}
    	\end{subfigure}
    	\caption{The three spanning trees on five unlabelled nodes: the 5-star $M_{75}^{(5)}$, the 5-arrow $M_{77}^{(5)}$ and the 5-path $M_{86}^{(5)}$. The total count of these subgraphs appears in the denominator of the generalized clustering coefficient $C(b)$, for $b=5$.}
    	\label{fig:5_node_subgraph_notation}
    \end{figure}

\begin{figure}\centering
                  \begin{subfigure}[b]{0.15\linewidth}
                    \centering
                   \includegraphics[scale=0.14]{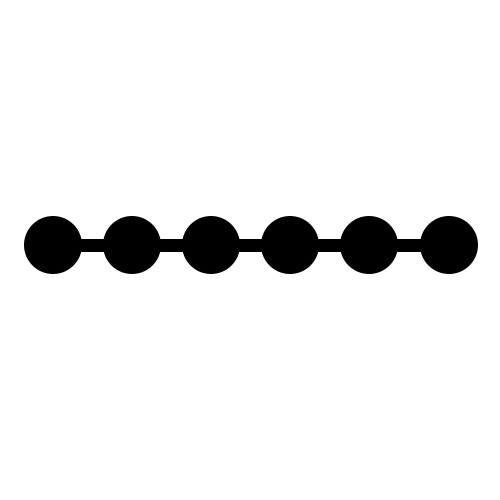}
                  \end{subfigure}
                  \begin{subfigure}[b]{0.15\linewidth}
                    \centering
                   \includegraphics[scale=0.14]{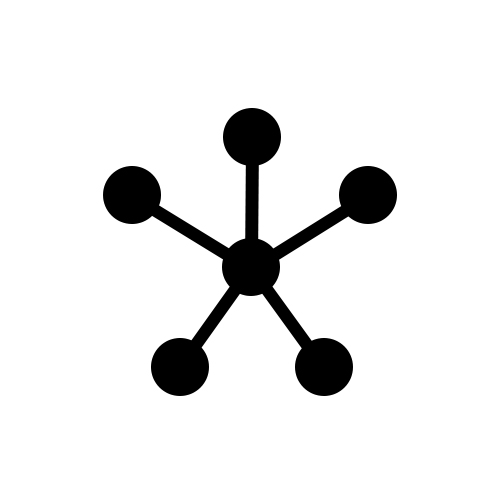}
                  \end{subfigure}
                  \begin{subfigure}[b]{0.15\linewidth}
                    \centering
                   \includegraphics[scale=0.14]{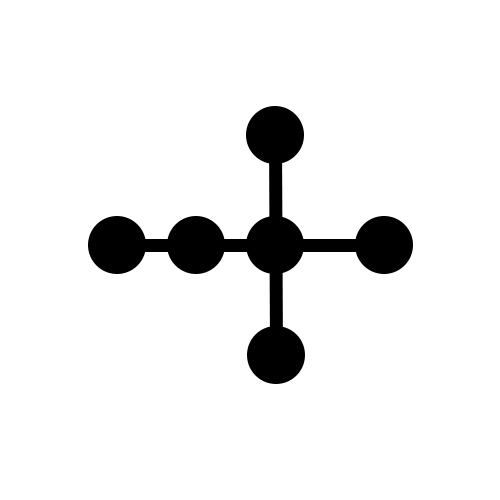}
                  \end{subfigure}
                  \begin{subfigure}[b]{0.15\linewidth}
                    \centering
                   \includegraphics[scale=0.14]{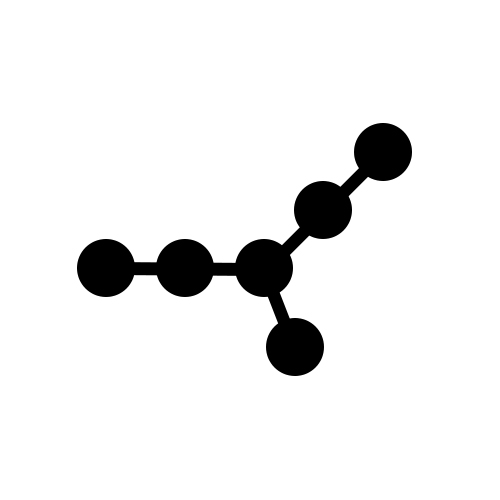}
                  \end{subfigure}
                  \begin{subfigure}[b]{0.15\linewidth}
                    \centering
                   \includegraphics[scale=0.14]{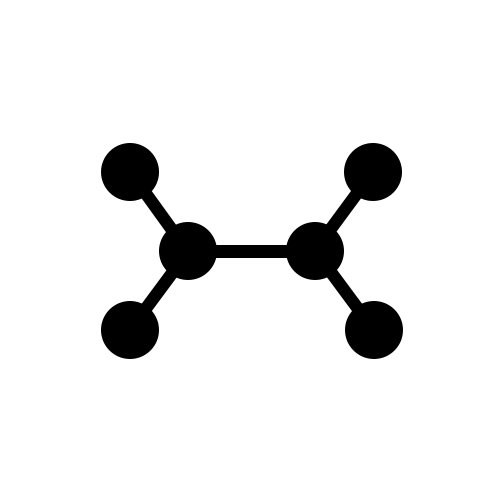}
                  \end{subfigure}
                  \begin{subfigure}[b]{0.15\linewidth}
                    \centering
                   \includegraphics[scale=0.14]{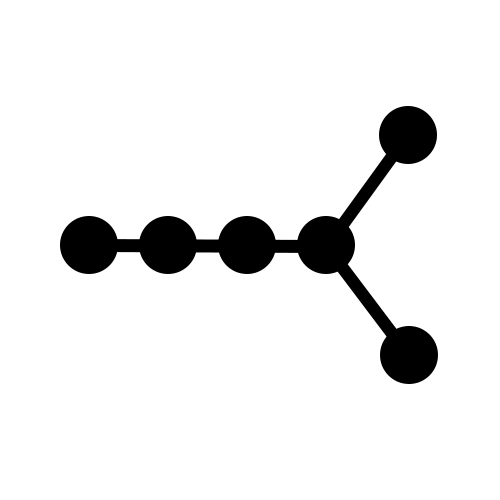}
                  \end{subfigure}
                  
                  \caption{The six non-isomorphic spanning trees on six unlabelled nodes. The total count of these subgraphs appears in the denominator of the generalized clustering coefficient $C(b)$, for $b=6$. The number of spanning trees that must be counted in order to compute $C(b)$ increases very rapidly with the number of nodes $b$ in the subgraph.}\label{fig:non_isomorphic_trees_six_nodes}
            \end{figure}
            
However, while this approach seems promising, it will rapidly become hard to derive analytic formulae for larger values of $b$, because the number of denominator terms will explode. Essentially, we would need to find a formula for \emph{every} non-isomorphic tree on $b$ nodes. For example, $C(6)$ would require evaluation of six denominator terms (Figure \ref{fig:non_isomorphic_trees_six_nodes}). Numerical values for the number of trees on $n$ unlabelled nodes are given as series A000055 in the Online Encyclopedia of Integer Sequences (\ \url{http://oeis.org/A000055}\ ). For example, $C(7)$ has 11 denominator terms, $C(8)$ has 23 denominator terms, and $C(36)$ has more than 6.2 $\times$ $10^{12}$ terms! This creates an intrinsic bound on the general applicability of analytic formulae for $C(b)$: we can reasonably expect to use them for $C(3)$, $C(4)$, $C(5)$ and perhaps $C(6)$ and $C(7)$, but not beyond. There has been considerable research on efficient numerical algorithms for generating all possible spanning trees of a simple undirected connected graph; see \cite{chakraborty_etal19} for a review and comparison of different methods. It is possible that numerical methods could be used to extend $C(b)$ to higher values of $b$, although computation of $C(b)$ requires consideration of \emph{every} possible set of $b$ connected nodes in the graph, and it is well known that the number of spanning trees of a graph increases exponentially in the number of nodes.\footnote{Note that the analytical algorithm is not a property of the generalized clustering coefficient itself, but a means to compute it efficiently for small $b$, and for small to moderate $n$. As in other areas, when exact analytic methods become intractable, it often becomes necessary to use numerical techniques and approximations instead, such as approximate sampling algorithms and asymptotic results.}

\subsection{The generalized clustering coefficient $C_{b-1}$ of Yin-Benson-Leskovec}
In recent work, Yin-Benson-Leskovec \cite{yin_etal18} develop an overall generalized clustering coefficient, based on clique expansion, that is closely related to $C(b)$.\footnote{``The novelty of our interpretation of the clustering coefficient is considering it as a form of clique expansion rather than as the closure of a length-2 path, which is key to our generalizations in the next section.'' (Page 052306-2 in \cite{yin_etal18})} Their coefficient (Equation 4 in \cite{yin_etal18}) is
\begin{equation*}
    C_{\ell} = \frac{(\ell^{2}+\ell)\, |K_{\ell+1}|}{|W_{\ell}|}; \quad \ell \geq 2,
\end{equation*}
where $K_{\ell+1}$ is the set of $(\ell+1)$-cliques and $W_{\ell}$ is the set of $\ell$ ``wedges'' (they define a wedge as an $\ell$-clique with one additional node that is adjacent to any node in the clique). Despite the apparently different formulation, we can write YBL's coefficient in the notation of our paper, with $\ell = b-1$, as
\begin{equation*}
    C_{b-1} = \frac{(b^{2}-b) \, |K_{b}|}{|L(b-1,1)|}; \quad b\geq 4,
\end{equation*}
where $L(\cdot,\cdot)$ is the lollipop graph. Note that the lollipop $L(2,1)$ is not typically defined, and so we need $b-1 \geq 3$. YBL nest the usual clustering coefficient in their generalized framework by implicitly defining $L(2,1)$ as a 3-path with directed edges (double-counting the undirected 3-path), which gives the coefficient 6 in the numerator rather than the usual 3. This is just a counting issue, and so we assume in the rest of the paper that $C(3)=C_{2}$.

We now compare the higher-order clustering coefficients on four and five nodes:
\begin{equation*}
    C(4) = \frac{16 \, |K_{4}|}{\textrm{number of $4$-spanning trees in $G$}} = \frac{16 \, |M_{63}^{(4)}|}{|M_{11}^{(4)}|+|M_{13}^{(4)}|}; \quad C_{3} = \frac{12 \, |K_{4}|}{|L(3,1)|} = \frac{12 \, |M_{63}^{(4)}|}{|M_{15}^{(4)}|},
\end{equation*}
where $M_{15}^{(4)}$ is the tadpole subgraph; and
\begin{equation*}
    C(5) = \frac{125 \, |K_{5}|}{\textrm{number of $5$-spanning trees in $G$}} = \frac{125 \, |M_{1023}^{(5)}|}{|M_{75}^{(5)}|+|M_{77}^{(5)}|+|M_{86}^{(5)}|}; \quad C_{4} = \frac{20 \, |K_{5}|}{|L(4,1)|} = \frac{20 \, |M_{1023}^{(5)}|}{|M_{127}^{(5)}|},
\end{equation*}
where $M_{127}^{(5)}$ is the kite subgraph (see \cite{lawford20} for the count formula). Hence YBL's coefficient $C_{b-1}$ fits naturally into the analytic framework of our paper. Likewise, our coefficient $C(b)$ can potentially use similar computational techniques to those in YBL. Intuitively, both methods will face computational difficulties as $b$ becomes large, and in fact YBL do not go beyond $b=5$. In practice, this is unlikely to be a serious issue, as shown by the empirical results of YBL, and by Section \ref{sec:empirical} in this paper. Small values of $b$ appear to be sufficient to capture much of the higher-order clustering that is present in some real-world networks.

The essential difference between $C(b)$ and $C_{b-1}$ is in the definition of the ``relative frequency''. We compute the frequency of $b$-cliques relative to the number of minimally connected subgraphs on $b$ nodes. On the other hand, YBL compute the frequency of $b$-cliques relative to the number of lollipops $L(b-1,1)$ i.e. a $b$-clique where all but one of the edges adjacent to one node have been removed. This reflects the two possible interpretations of the 3-path in the denominator of $C(3)$, either as a spanning tree on three nodes (our paper) or as a 2-clique with an additional adjacent node (YBL), and the two natural generalizations of $C(3)$ to higher-order clustering.

\subsection{Analysis of generalized clustering $C(b)$ and $C_{b-1}$ for the $G(n, p)$ model}
In the special case of the Erd{\H{o}}s-R{\'{e}}nyi random graph $G=G(n, p)$, it follows from (\ref{eq:c_b}) that the expectation of $C(b)$ is given by $\mathbb{E}_{G}[C(b)] = p^{(b-1)(b-2)/2}$ as $n$ becomes large, since there are $\binom{n}{b} \, p^{\binom{b}{2}}$ $b$-cliques and $b^{b-2} \, \binom{n}{b} \, p^{b-1}$ $b$-spanning trees in $G(n, p)$, on average. Also see Figure \ref{fig:G_n_p_clustering}. We implicitly assume that both $C(b)$ and $C_{b-1}$ are well-defined on $G$. Numerical values of $\binom{b-1}{2}$ are given in  A161680 of the Online Encyclopedia of Integer Sequences (\ \url{http://oeis.org/A161680}\ ).
\begin{figure}\centering
    \includegraphics[scale=0.35]{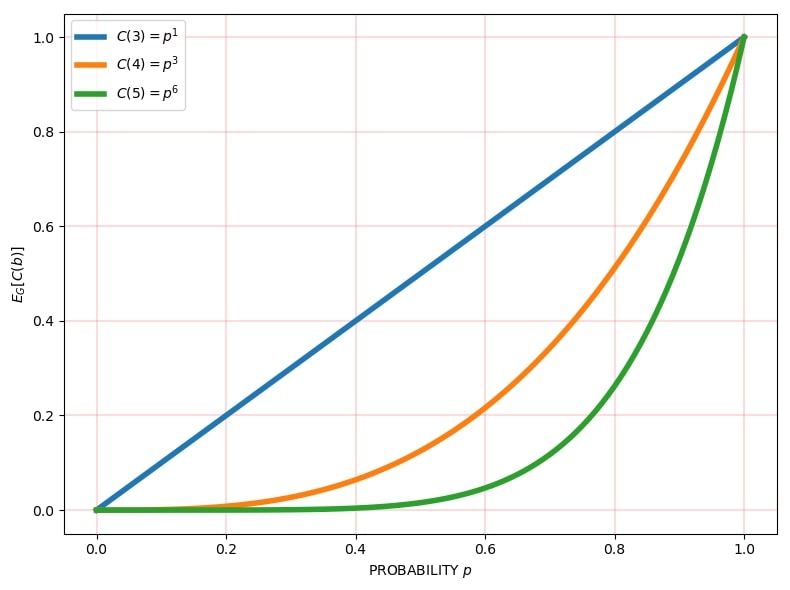}
    \caption{The theoretical expectation of the clustering coefficient $C(b)$ for the Erd{\H{o}}s-R{\'{e}}nyi random graph $G(n, p)$ is $\mathbb{E}_{G}[C(b)]=p^{(b-1)(b-2)/2}$ as $n$ becomes large, with edge-formation probability $0 \leq p \leq 1$. We observe that clustering is monotonically increasing in probability $p$, as the network moves from a set of disconnected nodes to a complete graph. For a given $p$, the expected level of clustering is decreasing in $b$.}\label{fig:G_n_p_clustering}
\end{figure}
We can also show that $\mathbb{E}_{G}[C_{b-1}]=p^{b-2}$ as $n$ becomes large, since there are $\binom{n}{b} \, p^{\binom{b}{2}}$ $b$-cliques and $(b^{2}-b) \, \binom{n}{b} \, p^{\binom{b-1}{2}+1}$ lollipops in $G(n, p)$, on average (see also Proposition 2(1) in \cite{yin_etal18} for this result). It follows immediately that $\mathbb{E}_{G}[C(b)] \gtreqqless \mathbb{E}_{G}[C_{b-1}]$ as $(b-2)(b-3) \lesseqqgtr 0$, with $p \neq 0,1$. Directly, $\mathbb{E}_{G}[C(b)] \leq \mathbb{E}_{G}[C_{b-1}]$, with equality when $p=0,1$ (for all $b$) or when $b=3$ (for all $p$). Intuitively, there will be more $b$-spanning trees than there are lollipops $L(b-1,1)$ in $G$. For example, there is one 4-star and two 4-paths in the tadpole; and there can also be 4-spanning trees that are not in a tadpole. In Figure \ref{fig:clustering_difference} we examine the expected difference between $C(b)$ and $C_{b-1}$ for $G=G(n, p)$, where $\mathbb{E}_{G}[C_{b-1}] - \mathbb{E}_{G}[C(b)] = p^{b-2} \, (1-p^{(b-2)(b-3)/2}) \geq 0$, with equality at $p=0,1$. We see that the difference can be substantial, and that it increases in the edge-formation probability $p$ (as the graph $G$ becomes more dense) up to a certain point, and then falls to zero; but that it can increase or decrease in $b$, the order of the clustering, for a given $p$.

\begin{figure}\centering
    \includegraphics[scale=0.35]{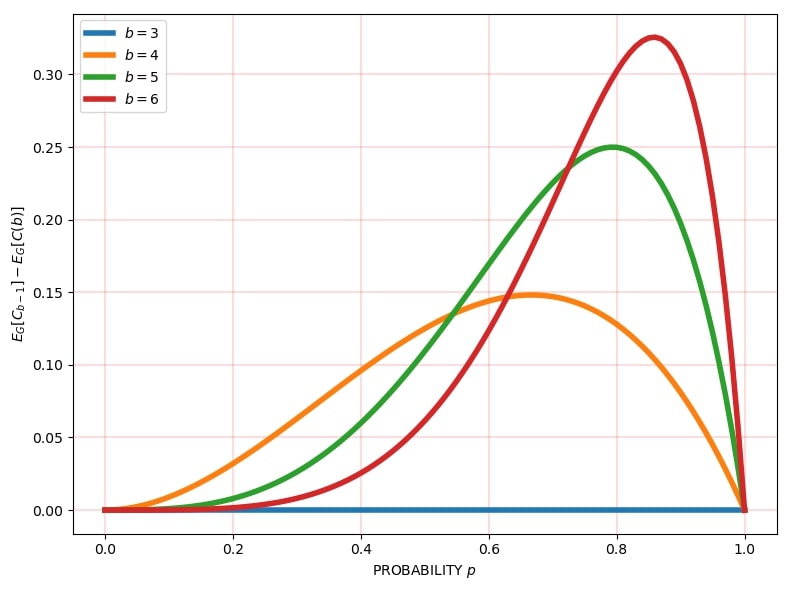}
    \caption{The theoretical difference in expectation between the clustering coefficient $C_{b-1}$ of \cite{yin_etal18} and our coefficient $C(b)$ for the Erd{\H{o}}s-R{\'{e}}nyi random graph $G(n, p)$ is $\mathbb{E}_{G}[C_{b-1}] - \mathbb{E}_{G}[C(b)]=p^{b-2}(1-p^{(b-2)(b-3)/2})$ as $n$ becomes large, with edge-formation probability $0 \leq p \leq 1$. We note that $C(3)=C_{2}$. We observe that $C(b)$ and $C_{b-1}$ are numerically identical when $p=0$ (a set of disconnected nodes) and $p=1$ (a complete graph). For a given $b$, the expected difference is positive and increases in $p$ up to a certain point, and then falls to zero. It is easy to show that the maximum is attained at $p=(2/(b-1))^{2/(b-2)(b-3)}$, which gives $p=2/3$ (for $b=4$), $p=2^{-1/3}\approx 0.7937$ (for $b=5$) and $p=(2/5)^{1/6}\approx 0.8584$ (for $b=6$). It is clear that $C_{b-1} > C(b)$ in expectation for all $p \neq 0,1$. The numerical difference between the two statistics can be quite substantial, given that $0 \leq C_{b-1} \leq 1$ and $0 \leq C(b) \leq 1$. The maximum difference is $4/27\approx 0.1481$ (for $b=4$), $1/4$ (for $b=5$) and $(108/3125)^{1/3}\approx 0.3257$ (for $b=6$).}\label{fig:clustering_difference}
\end{figure}

\subsection{Invariance of $C_{b-1}$ on graphs with vanishing density}
The YBL coefficient $C_{b-1} \, (b \geq 4)$ has a peculiar invariance property for a particular class of graphs, that is neither a feature of the usual clustering coefficient $C(3)=C_{2}$ nor of our generalized clustering $C(b)$. Note that $C(b)$ increases as the number of $b$-cliques increases or the number of $b$-spanning trees falls, while $C_{b-1}$ increases as the number of $b$-cliques increases or the number of lollipops $L(b-1,1)$ falls. It follows that $C_{b-1}$ will give the \emph{same} value for a lollipop graph $G=L(b,1)$ as for \emph{any} graph $G_{2}$ on more than $b+1$ nodes that does not contain any additional lollipops $L(b-1,1)$ beyond those contained in $G$.\footnote{If $G_{2}$ contains any positive contribution to higher-order clustering (in the sense of more $b$-cliques) then both $C(b)$ and $C_{b-1}$ will detect it.} We illustrate with two specific examples. First, take the lollipop $L(3,n-3)$ with $n \geq 4$. It is easy to show that $C(3)=C_{2}=3/(n+1)$, which is decreasing in $n$. This makes intuitive sense: as the lollipop becomes progressively less dense, there is less clustering, and in the limit $n \to \infty$ the lollipop resembles a path graph. Second, take the lollipop $L(4, n-4)$ that has $C(4)=16/(n+22)$ for $n \geq 6$: our fourth order clustering decreases in $n$. The usual clustering $C(3)=12/(n+10)$ (for $n \geq 5$) also falls in $n$. Again, this is intuitively correct: the density of $L(4,n-4)$, for $n \geq 5$, is $d(G)=2(n+2)/(n(n-1)) \to 0$ as $n \to \infty$. However, $C_{3}=12 \, |M_{63}^{(4)}|/|M_{15}^{(4)}|=0.8$ for all $n \geq 5$ i.e. it is invariant as $n$ increases.\footnote{We observe qualitatively the same result for the lollipop $L(5,n-5)$ with $C(3)=30/(n+28)$ for $n \geq 6$ and $C(5)=125/(n+203)$ for $n \geq 8$ but $C_{4}=5/6\approx 0.8333$ for $n \geq 6$ despite a vanishing density $d(G)=2(n+5)/(n(n-1)) \to 0$ as $n \to \infty$.}

This is interesting for two reasons. First, $C_{3}$ will not change even as the graph becomes infinitely sparse in the limit. Second, $C_{3}$ does not behave in the same way as the usual clustering coefficient $C_{2}$ in this respect. The problem is not specific to the lollipop, and $C_{b-1}$ will display this behaviour whenever a lollipop $L(b,1)$ is extended in a non-trivial way so that no additional $L(b-1,1)$ structure is added to the graph. Consider $G=L(5,1)$ with $n=6$ nodes and $m=11$ edges, so that $C_{4} \approx 0.8333$. Then consider a connected graph $G_{2}$ with an arbitrarily large number of nodes, that includes one copy of $G$ as a subgraph, and that has additional non-trivial topological structure outside of the subgraph $G$, that can include 17 of the possible 21 non-isomorphic subgraphs on five nodes \cite{lawford20}, but with no additional $L(4,1)$ structure. Nevertheless, $C_{4}$ will take the same value on $G_{2}$ as on $G$.

\subsection{Analysis of generalized clustering $C(b)$ for the small-world model}
We now investigate the properties of $C(b)$ for a simulated small-world model that can interpolate between regular and random behaviour. Many real-world networks exhibit small-world behaviour \cite{albert_barabasi02, barrat_weigt00, jackson_rogers05, marvel_etal13, newman00, watts_strogatz98, zaidi12}. Following the original one-parameter model of \cite{watts_strogatz98}, we start with a regular ring lattice on $n$ nodes, where each node is adjacent to its $k$ nearest neighbours in both the clockwise and anti-clockwise directions, giving a total $nk$ edges. To have a sparse but connected network, we assume that $n \gg 2k \gg \ln(n) \gg 1$.\footnote{If $n=2k+1$ then the graph is complete, and $C(b)=1$ from Proposition \ref{thm:case_c(b)=1}, for $3 \leq b \leq n$. If $n>2k+1$ then, trivially, we have that $C(b)=0$ for $b>k+1$ on the regular ring lattice with no rewiring.} We choose an edge that connects a node $u$ to its nearest neighbour $v$ in a clockwise direction, and rewire this edge, uniformly with probability $p$, to an edge $(u,w)$, avoiding self-loops and duplicated edges. With probability $1-p$ the original edge is left in place. We continue in a clockwise direction around the ring until all nodes have been considered once. We then look at edges that connect each node to its second-nearest neighbour, and so on, continuing around the ring $k$ times, until each edge in the original lattice has been examined once. The edge rewiring creates more randomness in the network: when $p=0$, the ring lattice is unchanged and completely regular, and when $p=1$ all edges are rewired randomly. It is well-known \cite{watts_strogatz98} that intermediate values of $p$ give graphs with the small-world property, characterized by low average path length (as in a random graph) but high clustering (as in a regular graph). This is due to the presence of a small number of ``short-cut'' edges that connect nodes that would otherwise be far apart in a regular graph.

In Figure \ref{fig:small_world_n50_k7_M200}, we report the expected clustering coefficient $\mathbb{E}_{G}[C(b)]$ from 250 replications of a small-world graph $G$ on $n=50$ nodes, with $k=7$. The small-world model is connected by construction, and so $C(b)$ will be well defined. We note that clustering falls in $b$ (given $p$) but that it is not monotonic as $p$ increases (given $b$): at some point, introducing more randomness actually increases clustering.\footnote{The observation that expected clustering falls in $b$ for a given $p$ is qualitatively the same as Figure 3 in \cite{yin_etal18}, who also consider this small-world model, but for their \emph{average} clustering coefficient, and with (apparently one replication of) a small-world model on $n=20,000$ nodes, and $k=5$. Figure 3 in \cite{yin_etal18} also suggests that their average clustering decreases monotonically in $p$ (for a given $b$) for this large $n$. See Figure \ref{fig:small_world_n50_k7_M200_ybl} for simulation results on the expected overall clustering coefficient $\mathbb{E}_{G}[C_{b-1}]$ of Yin-Benson-Leskovec, and Figure \ref{fig:clustering_difference_small_world} for simulation results on the difference in expected clustering $\mathbb{E}_{G}[C_{b-1}] - \mathbb{E}_{G}[C(b)]$, for the small-world model of \cite{watts_strogatz98}.} Finding a closed-form formula for the expected value of $C(b)$ in a finite small-world model is an open problem. Some partial results are available \cite{albert_barabasi02, barrat_weigt00, watts_strogatz98}. When $p=0$, we have $C(3)=3/2 \times (k-1)/(2k-1)$ for all $n$ large relative to $k$. There are $n\,\binom{k}{2}$ triangles as $n$ is large relative to $k$, and $n\,k\,(2k-1)$ connected triples. In the example of Figure \ref{fig:small_world_n50_k7_M200}, $C(3) = 9/13 \approx 0.6923$ for $p=0$. Asymptotically ($n \to \infty$), it can be shown that $C(3) \approx 3/2 \times (k-1)/(2k-1) \times (1-p)^{3}$ when $p \neq 0,1$, and $C(3) \sim 2k/n$ when $p=1$. So, $C(3)$ is asymptotically monotonically decreasing in $p$ from $3/2 \times (k-1)/(2k-1)$ to zero.

\begin{figure}\centering
    \includegraphics[scale=0.4]{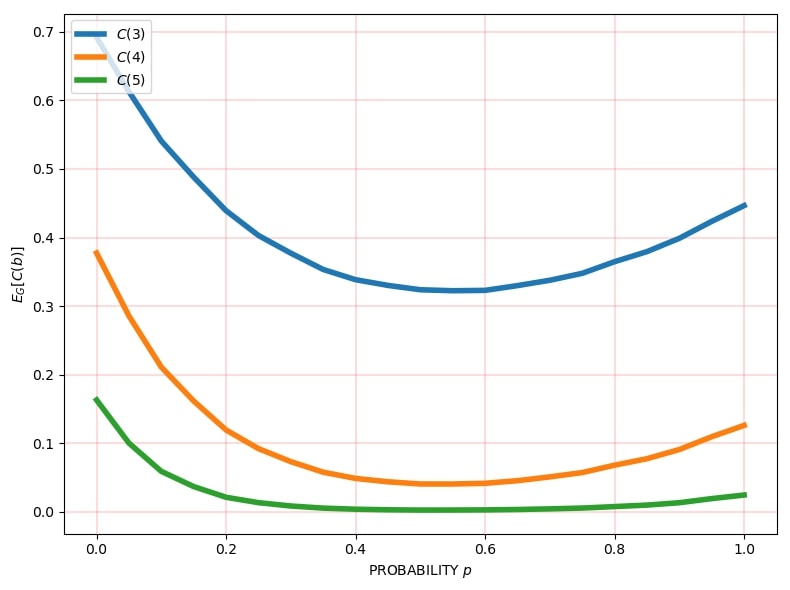}
    \caption{The simulated expected clustering coefficient $\mathbb{E}_{G}[C(b)]$ from 250 replications of a small-world graph with $n=50$ nodes, each of which has degree $2k=14$, and edge-rewiring probability $0 \leq p \leq 1$, which parameterizes the network from a regular graph to a random one. As in \cite{watts_strogatz98}, we begin with a regular ring lattice, where each node is connected to its $k$ nearest neighbours in both the clockwise and anti-clockwise directions. Moving around the lattice in a clockwise direction, each edge between a node $u$ and its nearest neighbour $v$ is randomly and uniformly rewired with probability $p$ to another edge $(u,w)$, where self-loops and repeated edges are not allowed. We then continue with the second nearest neighbour, and so on. We observe that clustering falls in $b$ but that it is not monotonic as $p$ increases. As $p$ increases from $p=0$, the randomness that is added to the graph ``breaks'' the regular clusters and reduces $C(b)$. At some point, though, introducing more randomness actually increases clustering. Finding a closed-form formula for the expectation of $C(b)$ in a small-world model with $n<\infty$ remains an open problem, except for $b=3$ and $p=0$ whereupon $C(3)=3/2 \times (k-1)/(2k-1)$ (see \cite{albert_barabasi02}), which equals $9/13 \approx 0.6923$ in this example.}\label{fig:small_world_n50_k7_M200}
\end{figure}

\section{Empirical Results on Air Transport Networks}\label{sec:empirical}
To illustrate the behaviour of $C(b)$, we construct quarterly networks for eight airline carriers over the period 1999Q1 to 2013Q4, using publicly-available data from the U.S. Department of Transportation's DB1B Airline Origin and Destination survey and T-100 Domestic Segment (All Carriers) census.\footnote{The carriers are American Airlines (AA), Alaska Airlines (AS), Delta Air Lines (DL), AirTran Airways (FL), Spirit Airlines (NK), United Airlines (UA), US Airways (US), and Southwest Airlines (WN).} The DB1B is a 10\% random sample of quarterly ticket-level itineraries, collected from reporting carriers. The T-100 is a monthly 100\% census on domestic nonstop flight segments, including number of enplaned passengers and available capacity. Both datasets have been widely used in empirical work in economics e.g. \cite{aguirregabiria_ho12, ciliberto_tamer09, dai_etal14}. We do not observe the actual date of flight or purchase, ticket restrictions, or the buyer's characteristics.

We merge the DB1B and T-100, retaining all scheduled nonstop round-trip tickets, for domestic carriers, between airports in the continental U.S. We do not keep tickets that were sold under a codesharing agreement, that have unusually high or low fares, or that are considered unreliable by the data provider. Some carriers (e.g. JetBlue Airways and Southwest Airlines) report large numbers of business and first-class tickets. We only use coach class tickets, unless more than 75\% of a carrier's tickets are listed as business or first-class, in which case we keep all tickets for that carrier.\footnote{The data treatment is quite standard in the empirical air transport literature e.g. \cite{ciliberto_williams10} motivate our filtering of tickets by fare class.} Individual tickets are then aggregated to non-directional route-carrier observations. We omit route-carriers with an especially low number of passengers, that do not have a constant number of passengers on each segment, or that are not present over the full sample period. In building the route networks, a node is an airport that was served as a route origin or destination, and an edge is present if some passengers travelled on a direct route between two nodes, for a given carrier-quarter. Our eight empirical networks are connected in every quarter of the sample. Further details of the data treatment are available from the authors.

\subsection{Descriptive statistics and small-world characteristics of airline networks}
Table \ref{tab:network_descriptives} reports descriptive statistics on the eight carrier networks in our dataset for 2013Q4. The networks are generally small (nodes and edges) and sparse (density 6--23\%). We compute the average path length and clustering coefficient $C(b)$ for $b=3,4,5$ for each real-world network, and compare these to values from simulated Erd{\H{o}}s-R{\'{e}}nyi random graphs $G(n, p)$, with $n$ equal to the number of nodes in each observed network, and the edge-formation probability $p$ set equal to its density.\footnote{The clustering coefficients reported in Table \ref{tab:network_descriptives} for $G(n,p)$ are based upon simulations and are not the theoretical asymptotic values. See Table \ref{tab:network_descriptives_ybl} for descriptive statistics on the Yin-Benson-Leskovec statistic $C_{b-1}$.} As in \cite{watts_strogatz98}, there is some evidence that most U.S. airline networks are small-world, with average path lengths that are close to those from a random graph, but higher levels of clustering.

\begin{table}
\scriptsize
\begin{center}
\begin{tabular}{lrrcccccccccc}
Carrier & Nodes & Edges & Density & apl & $\mathrm{apl}_{\mathrm{rand}}^{\mathrm{conn}}$ & $C(3)$ & $C(3)_\mathrm{rand}$ & $C(4)$ & $C(4)_\mathrm{rand}$ & $C(5)$ & $C(5)_\mathrm{rand}$ & Connected \% \\ \hline
 & & & & & & & & & & & & \\ 
AA & 71 & 153 & 0.06 & 1.94 & 3.01 & 0.120 & 0.061 & 0.018 & 0.000 & 0.002 & 0.000 & 44.6 \\ 
AS & 34 & 49 & 0.09 & 2.00 & 3.12 & 0.037 & 0.085 & 0.000 & 0.001 & 0.000 & 0.000 & 18.2 \\ 
DL & 85 & 221 & 0.06 & 1.98 & 2.84 & 0.146 & 0.061 & 0.021 & 0.000 & 0.002 & 0.000 & 65.4 \\ 
FL & 38 & 78 & 0.11 & 1.94 & 2.63 & 0.154 & 0.108 & 0.008 & 0.001 & 0.000 & 0.000 & 61.7 \\ 
NK & 29 & 92 & 0.23 & 1.95 & 1.96 & 0.379 & 0.223 & 0.097 & 0.011 & 0.016 & 0.000 & 97.3 \\ 
UA & 48 & 158 & 0.14 & 2.03 & 2.23 & 0.346 & 0.138 & 0.122 & 0.003 & 0.034 & 0.000 & 96.0 \\ 
US & 58 & 113 & 0.07 & 2.09 & 3.03 & 0.115 & 0.067 & 0.011 & 0.000 & 0.000 & 0.000 & 35.3 \\ 
WN & 88 & 522 & 0.14 & 1.99 & 2.04 & 0.335 & 0.136 & 0.106 & 0.002 & 0.031 & 0.000 & 99.9 \\ 
\end{tabular}
\caption{Descriptive statistics for eight carrier networks in 2013Q4. The carriers are American Airlines (AA), Alaska Airlines (AS), Delta Air Lines (DL), AirTran Airways (FL), Spirit Airlines (NK), United Airlines (UA), US Airways (US), and Southwest Airlines (WN). The networks are generally small (nodes and edges) and sparse (density). The average path length (apl) is the number of edges in the shortest path between two nodes, averaged over all pairs of nodes: it is close to two for all networks. We compare the average path length and clustering coefficients $C(3)$, $C(4)$ and $C(5)$ to realizations from  Erd{\H{o}}s-R{\'{e}}nyi random graphs $G(n, p)$ with $n$ equal to the number of nodes in the observed network, and edge-formation probability $p$ equal to its density. The random average path length ($\mathrm{apl}_{\mathrm{rand}}$) for $G(n, p)$ is averaged over 1000 replications. Some of the random graphs are not connected (and Connected \% gives the percentage of connected realizations across all replications), and have an infinite average path length. For that reason, we compute the average path length only across connected realizations, with clustering coefficients averaged over all realizations of $G(n, p)$, both connected and disconnected, as $C(3)_\mathrm{rand}$, $C(4)_\mathrm{rand}$ and $C(5)_\mathrm{rand}$. The percentage of connected realizations is positively related to the network density. There is some evidence (strongest for Southwest Airlines) that the airlines have a small-world property, with similar average path lengths to a random graph, but higher three-node clustering; although the average path lengths for connected random graphs are generally higher than those observed in the real networks. Alaska Airlines has some evidence of randomness, with particularly low clustering. We see that generalized clustering $C(4)$ and $C(5)$ are typically higher than random for all carriers.}
\label{tab:network_descriptives}
\end{center}
\end{table}

\subsection{The distribution of maximal cliques}
Since $C(b)$ is intimately related to the relative number of cliques, we use the Bron-Kerbosch algorithm to identify all cliques in a given network. Figure \ref{fig:4_clique_southwest} displays the 2013Q4 network of Southwest Airlines, and we highlight one maximal 4-clique for illustration, between Albuquerque, Dallas, Houston, and Kansas City. We report the airport codes and co-ordinates in Table \ref{tab:list_of_airports_2013_4} in Appendix \ref{sec:correlations}. It is interesting to see how many maximal cliques of any given size there are in a network, and whether this distribution is stable over time. For Southwest's network, in Figure \ref{fig:kernel_plots_maximal_cliques_southwest}, we observe that the distribution is more spread out, and that more larger cliques appear, over time. There is a maximum clique size of eleven, which corresponds to 12.5\% of all of the airports served by Southwest in 2013Q4. This might seem surprising, given that Southwest's network is relatively sparse, with a density $d(G)$ close to 15\% in that quarter. Since every airport in the maximum clique has at least 11 connections, we can think of it as a group of ``important'' airports, that are also very highly connected among themselves.\footnote{We find evidence that nodes that belong to maximal cliques in Southwest's network are more connected, on average, than nodes that are not in maximal cliques, and that the average degree of nodes in maximal cliques increases in the order of the clique (Figure \ref{fig:degree_of_nodes_in_cliques_WN_2013_4}).} An operational reason for developing such groups could be to enable the opening of a large number of new indirect routes between airport pairs, at relatively low cost, with the addition of a few well-chosen direct routes. It seems likely that Southwest, through its network expansion, has focused on increasing the size and connectivity of a moderate number of ``central'' airports while also creating links from non-central airports into this group.\footnote{Not all networks evolve in this way e.g. the distribution of maximal cliques for American (AA) is far more stable over time (Figure \ref{fig:kernel_plots_maximal_cliques_american}).}

\begin{figure}\centering
    \includegraphics[scale=0.45]{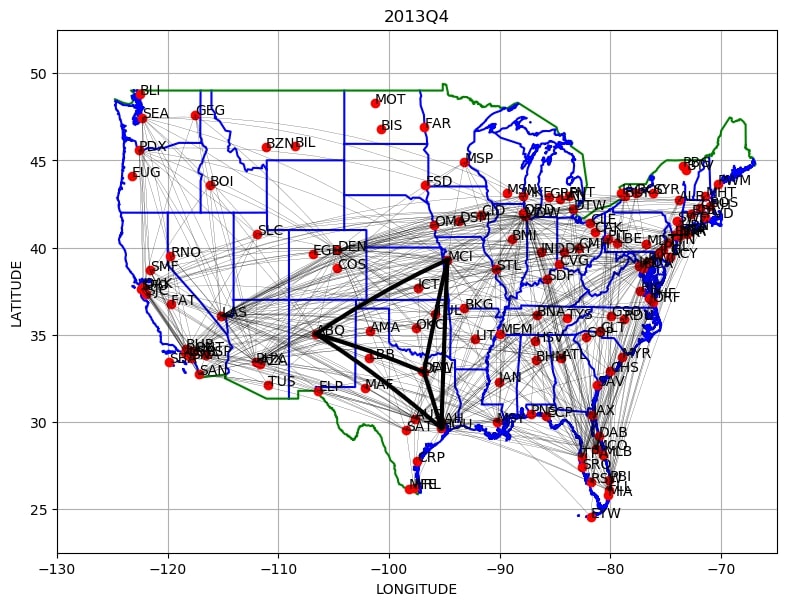}
    \caption{The Albuquerque--Dallas--Houston--Kansas City  maximal 4-clique in Southwest's network, found using the Bron-Kerbosch algorithm, is one example of a completely-connected set of four nodes. Southwest Airlines served every pairwise route among these four airports in 2013Q4. Since this is a maximal clique, no other airport in Southwest's network can be added to the clique while preserving its complete connectivity (to create a 5-clique).}\label{fig:4_clique_southwest}
\end{figure}

\begin{figure}\centering
              \begin{minipage}[b]{0.45\linewidth}
                \centering
                \includegraphics[scale=0.35]{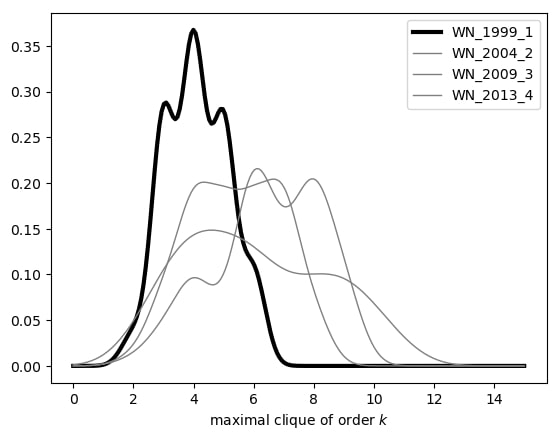}
              \end{minipage}
              \begin{minipage}[b]{0.45\linewidth}
                \centering
                \includegraphics[scale=0.35]{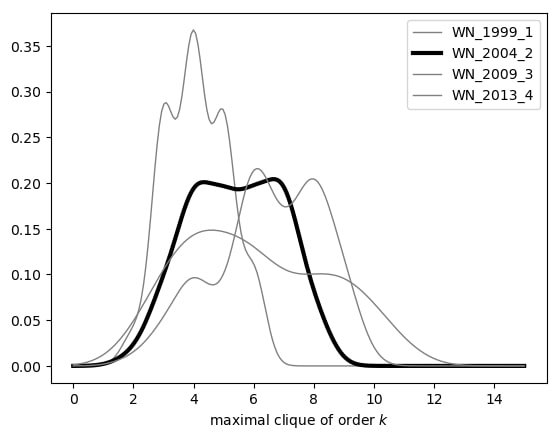}
              \end{minipage} 
              \begin{minipage}[b]{0.45\linewidth}
                \centering
                \includegraphics[scale=0.35]{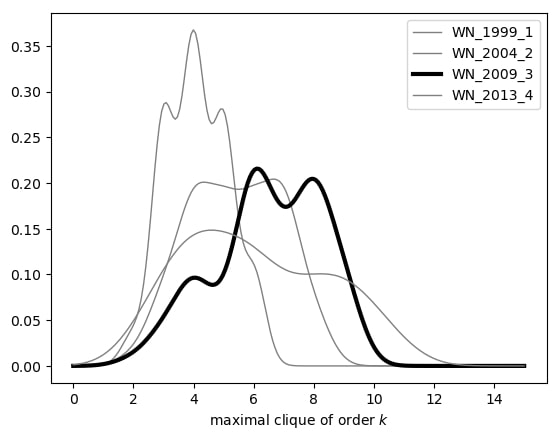}
              \end{minipage}
              \begin{minipage}[b]{0.45\linewidth}
                \centering
                \includegraphics[scale=0.35]{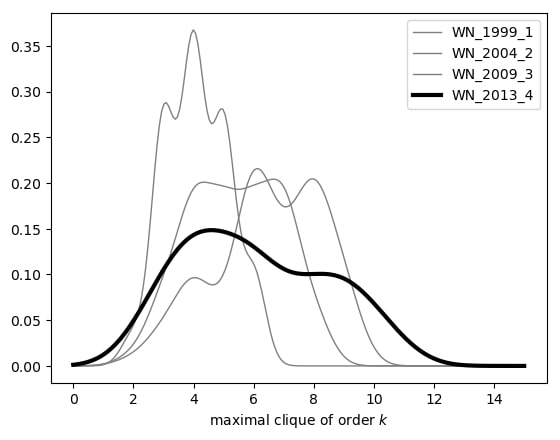}
              \end{minipage} 
              \caption{The distribution of maximal $k$-cliques in Southwest's network, in 1999Q1, 2004Q2, 2009Q3 and 2013Q4. This shows that Southwest is creating increasingly large groups of very highly interconnected airports over time.}\label{fig:kernel_plots_maximal_cliques_southwest}
\end{figure}

\subsection{Dynamic variation in higher-order clustering}\label{sec:test}
Since there are many cliques with more than three nodes, we examine how $C(3)$, $C(4)$ and $C(5)$ vary across carriers, and over time (Figure \ref{eq:cb_dynamics}). We make the following remarks:

\begin{figure}\centering
                  \begin{subfigure}[b]{0.4\linewidth}
                    \centering
                   \includegraphics[scale=0.2]{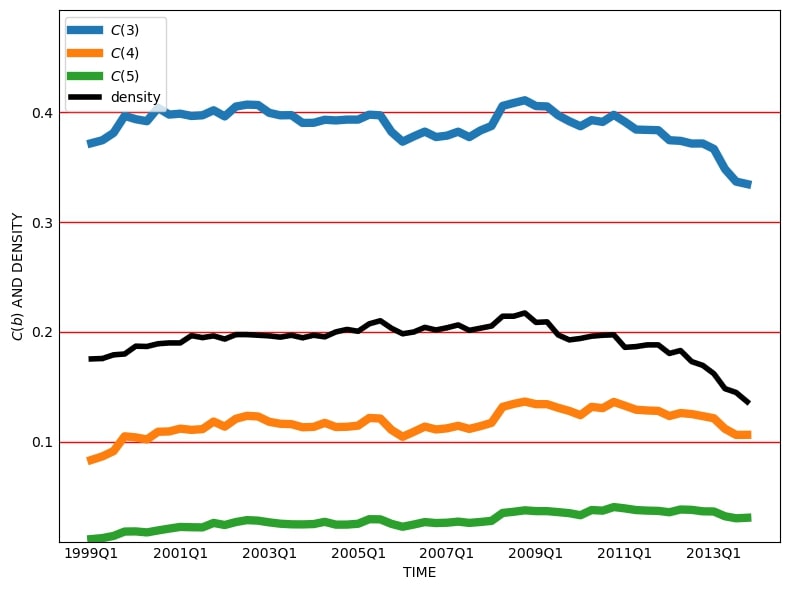}
                    \caption{Southwest Airlines.}
                    \vspace{3mm}
                  \end{subfigure}
                  \begin{subfigure}[b]{0.4\linewidth}
                    \centering
                    \includegraphics[scale=0.2]{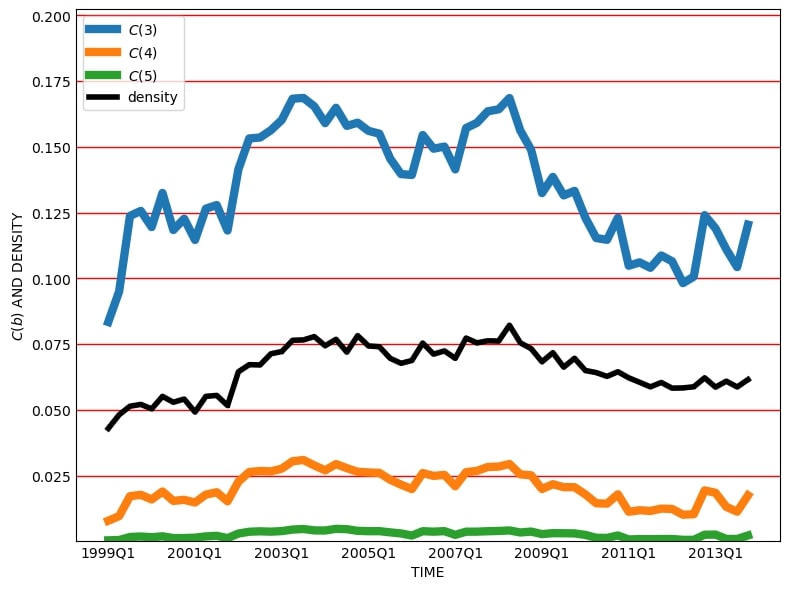}
                    \caption{American Airlines.}
                    \vspace{3mm}
                  \end{subfigure} 
                  \begin{subfigure}[b]{0.4\linewidth}
                    \centering
                    \includegraphics[scale=0.2]{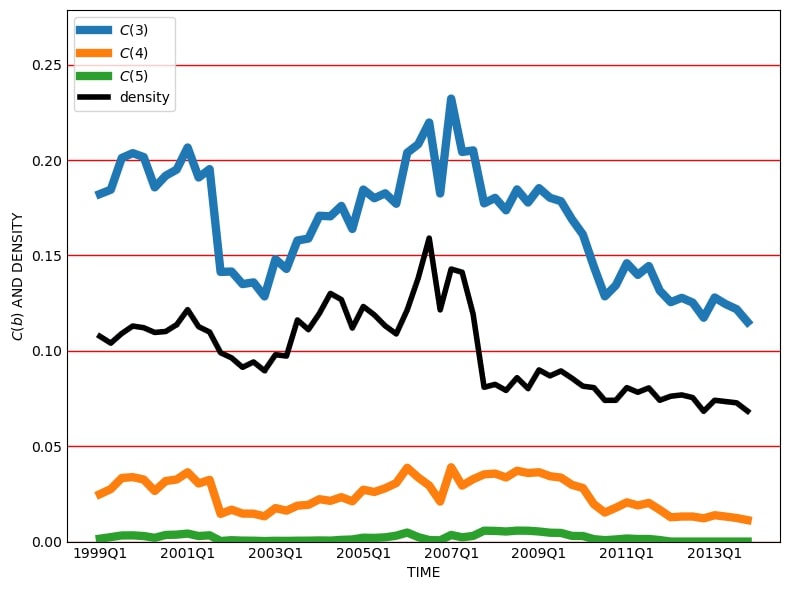}
                    \caption{US Airways.}
                    \vspace{3mm}
                  \end{subfigure}
                  \begin{subfigure}[b]{0.4\linewidth}
                    \centering
                    \includegraphics[scale=0.2]{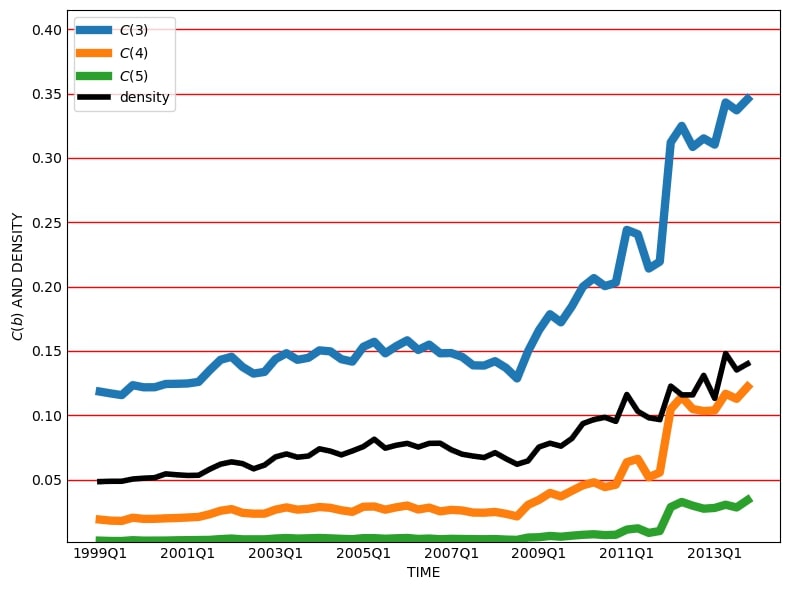}
                    \caption{United Airlines.}
                    \vspace{3mm}
                  \end{subfigure}
                  \begin{subfigure}[b]{0.4\linewidth}
                    \centering
                    \includegraphics[scale=0.2]{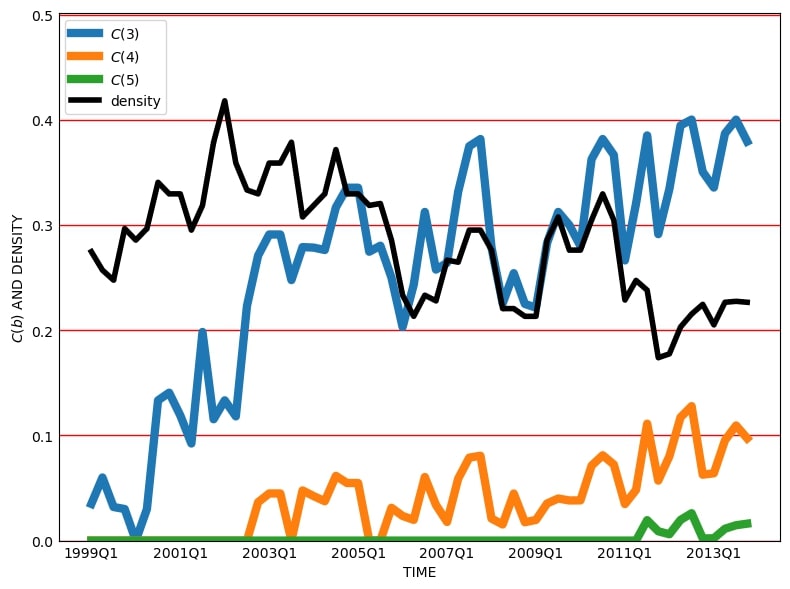}
                    \caption{Spirit Airlines.}
                    \vspace{3mm}
                  \end{subfigure} 
                  \begin{subfigure}[b]{0.4\linewidth}
                    \centering
                    \includegraphics[scale=0.2]{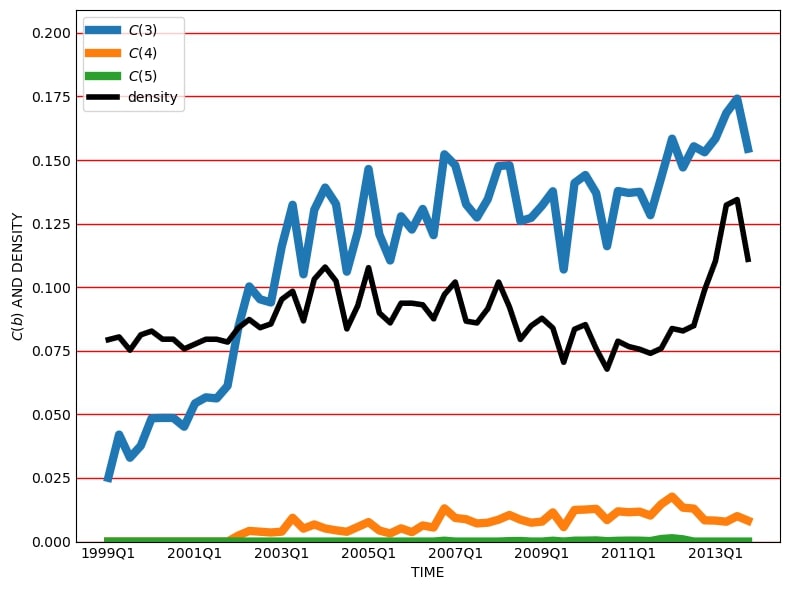}
                    \caption{AirTran Airways.}
                    \vspace{3mm}
                  \end{subfigure} 
                  \begin{subfigure}[b]{0.4\linewidth}
                    \centering
                    \includegraphics[scale=0.2]{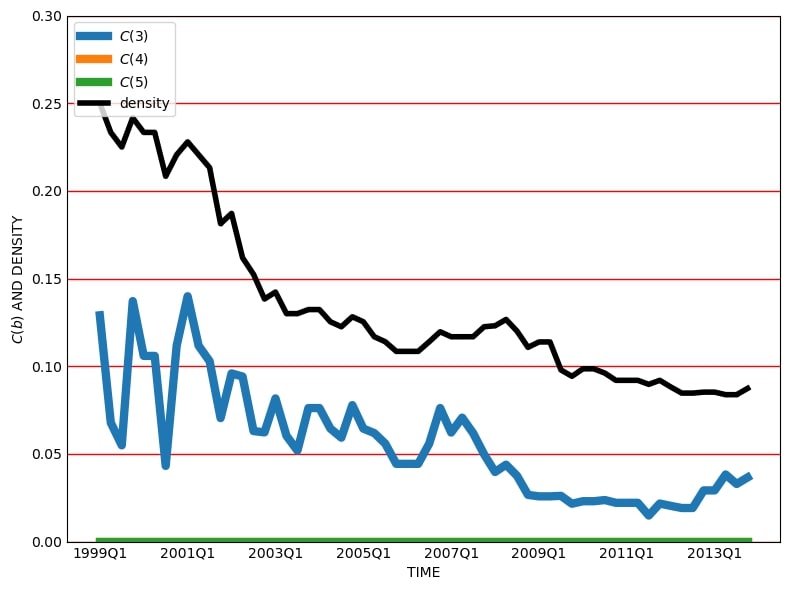}
                    \caption{Alaska Airlines.}
                  \end{subfigure} 
                  \begin{subfigure}[b]{0.4\linewidth}
                    \centering
                    \includegraphics[scale=0.2]{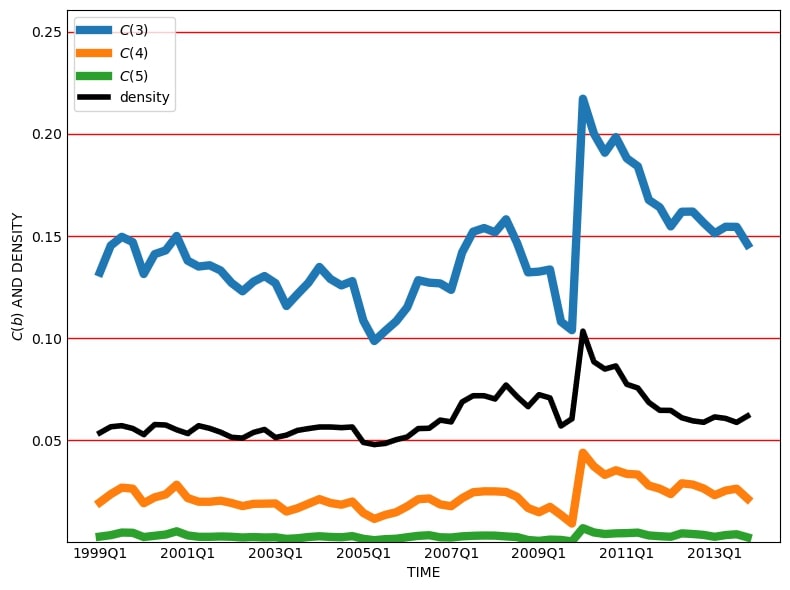}
                    \caption{Delta Air Lines.}
                  \end{subfigure} 
                  
                  \caption{The dynamic behaviour of $C(3)$, $C(4)$, $C(5)$ and density from 1999Q1 to 2013Q4 for eight airline carriers. We observe substantial heterogeneity across carriers, a high degree of correlation between generalized clustering $C(b)$ for different $b$, and evidence that $C(b)$ is decreasing in $b$ (in the text, we give theoretical counterexamples to the latter observation).}\label{eq:cb_dynamics}
\end{figure}
            
\begin{itemize}
    \item There is considerable heterogeneity across carriers. For instance, Southwest has quite a stable $C(3)$ over 1999Q1 to 2013Q4, despite its significant expansion in terms of airports and routes. On the other hand, United has far more clustering (in triangles) from 2009 onwards, while Alaska has progressively less.
    
    \item Generalized clustering $C(b)$ is highly positively correlated across $b$, for some networks e.g. Delta, and is also positively correlated with network density (see Table \ref{tab:correlations} in Appendix \ref{sec:correlations}). Some of this follows by construction e.g. for every newly formed 4-clique in a network, there will be between two and four new 3-cliques, while every newly formed 5-clique will create between two and five new 4-cliques and between three and ten new 3-cliques. High correlation reduces the information-content of $C(4)$ and $C(5)$, but it is unclear whether this result holds for other real-world networks. To control for this correlation, we consider the following regressions: $C(4) = \mathrm{constant} + \beta \, C(3) + \mathrm{error}$ and $C(5) = \mathrm{constant} + \beta \, C(3) + \gamma \, C(4) + \mathrm{error}$, where the residuals could be used rather than $C(4)$ and $C(5)$ themselves. We illustrate the $C(4)$ procedure in Figure \ref{fig:correlation_correction}, for US Airways and Southwest (WN). Since $C(3)$ and $C(4)$ display evidence of a unit root (US) and a unit root and trend (WN), we first run regressions of the form $\Delta \, C(b)_{t} = \alpha + \delta \, t + u(b)_{t}$, for $b=3,4$. We then regress the difference and trend stationary $\widehat{u}(4)$ on a constant and $\widehat{u}(3)$, and find that 76\% (US) and 89\% (WN) of the variation in $C(4)$ is ``explained'' by $C(3)$. In this sense, $C(4)$ is moderately informative once $C(3)$ has been accounted for. It is unclear if other networks will give the same results, but similar behaviour is noticed on $C_{b-1}$ by \cite{yin_etal18} for other real-world networks.
    
    \item Table \ref{tab:network_descriptives} and Figure \ref{eq:cb_dynamics} provide evidence that $C(3) > C(4) > C(5)$, and we might think that this holds quite generally. However, Figure \ref{fig:clustering_difference} shows that this is not necessarily the case for $G(n,p)$. We can also construct a series of counterexamples using the lollipop graph $L(b, n-b)$.
    \begin{figure}\centering
    	\includegraphics[width=.4\linewidth]{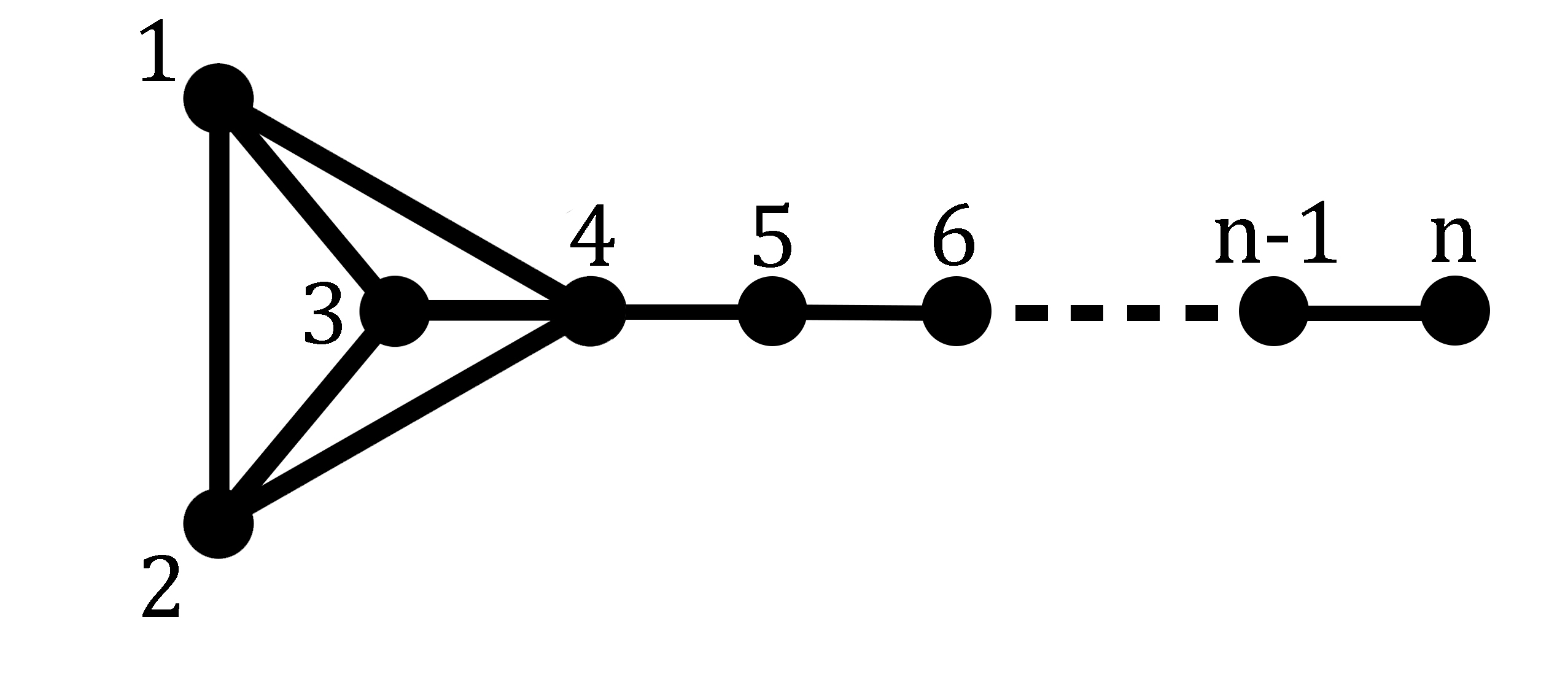}
    	\caption{The lollipop graph $L(4, n-4)$ is a counterexample to $C(3) \geq C(4)$, which does not hold for $n \geq 27$.}
    	\label{fig:counterexample}
    \end{figure}
    As $n$ increases given $b$, the number of complete subgraphs of order no more than $b$ does not change (for instance, there are four triangles and one 4-complete subgraph in the lollipop graph $L(4 ,n-4)$ of Figure \ref{fig:counterexample}). Furthermore, increasing $n$ after a certain point will only add paths of length $b$ to the denominator of $C(b)$, and no other spanning trees (e.g. $b$-stars). For $L(4, n-4)$, we have already seen that
    \begin{equation*}
        C(3) = \frac{12}{n + 10}; \, n \geq 5; \quad C(4) = \frac{16}{n + 22}; \, n \geq 6,
    \end{equation*}
    and it follows that $C(3) \geq C(4)$ as $n \leq 26$, with equality when $C(3) = C(4) = 1/3$.
    \begin{figure}\centering
    	\includegraphics[width=.4\linewidth]{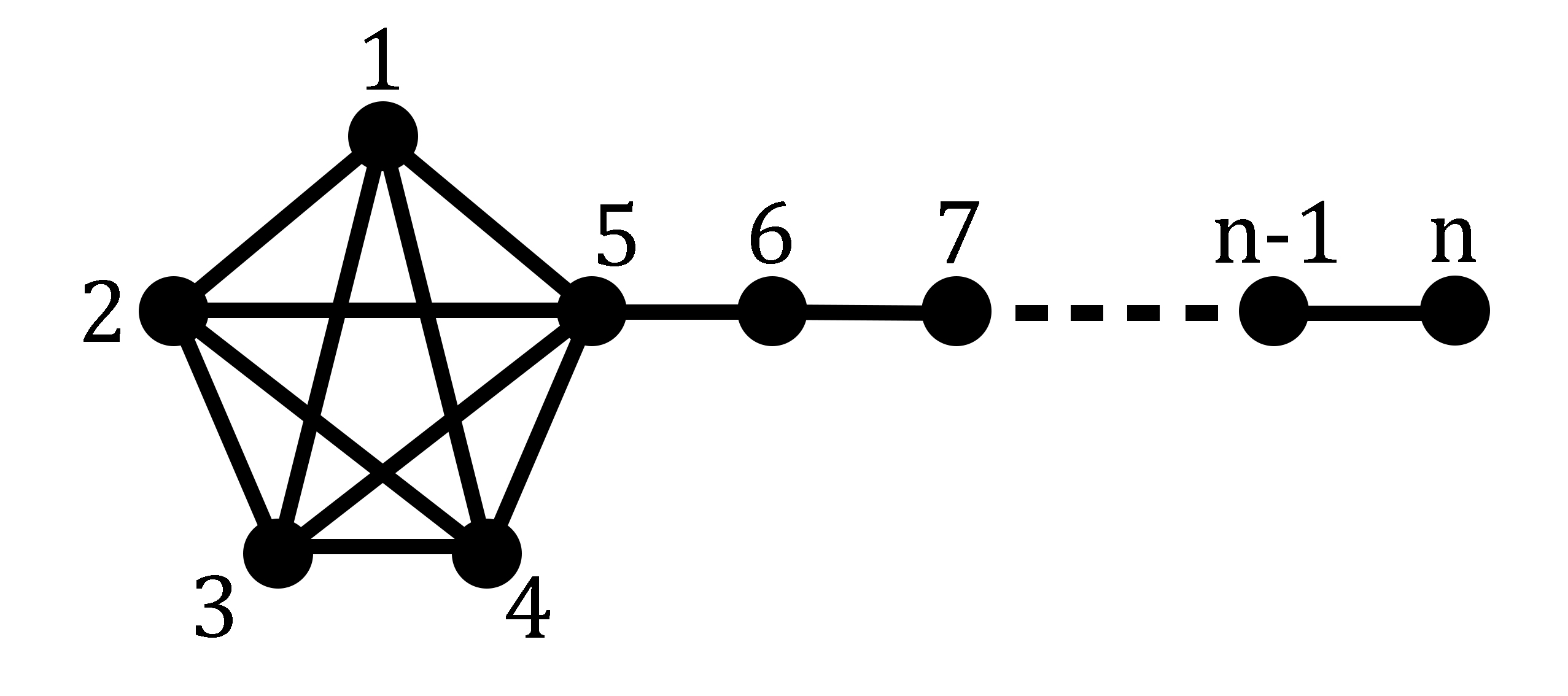}
    	\caption{The lollipop graph $L(5, n-5)$ is a counterexample to $C(4) \geq C(5)$, which does not hold for $n \geq 98$.}
    	\label{fig:counterexample2}
    \end{figure}
    In the lollipop $L(5, n-5)$ of Figure \ref{fig:counterexample2}, the number of 4-complete and 5-complete subgraphs is constant as $n$ increases. Beyond a certain point, only 4-paths and 5-paths are added to the denominators of $C(4)$ and $C(5)$ and no further 4-stars or 5-stars or 5-arrows are created. We can show that

    \begin{equation*}
        C(4) = \frac{80}{n + 95}; \, n \geq 7; \quad C(5) = \frac{125}{n + 203}; \, n \geq 8,
    \end{equation*}
    from which $C(4) \geq C(5)$ as $n \leq 97$. Equality occurs when $C(4) = C(5) = 5/12$. Incidentally, for this graph, $C(3) \geq C(4)$ as $n \leq 12.2$, i.e., $n < 13$. We could use this construction to show that $C(b) < C(b+1)$ for any $b < n$ and sufficiently large $n$.

\end{itemize}

\newpage

\section{Conclusions}
    
        We have proposed a generalized clustering coefficient $C(b)$ of order greater than or equal to three, that nests the usual overall clustering, or transitivity, $C(3)$. We investigate the properties of $C(b)$ on random and small-world graphs, and propose an algorithm for implementation that is based on analytical subgraph counts, and that is practical for $b=3,4,5$ when the graph is not too large. Our work complements the recent paper by Yin-Benson-Leskovec \cite{yin_etal18}, who generalize overall clustering in a different way, and we draw careful comparisons with their findings. We illustrate the performance of our measure using data on U.S. airline route networks, and provide new insight into the strategic behaviour that leads carriers to develop small groups of highly connected airports. Extending these ideas to generalize the notion of a ``hub'' node to multi-node hubs, in both transportation and other real-world networks, is a promising avenue for future work. We expect analytic formulae for subgraph enumeration to have application in other areas of applied graph theory (e.g. motif detection \cite{agasse-duval_lawford18}), although it currently seems too difficult to derive analytic formulae for all subgraphs on more than five nodes (for complete results on five node subgraphs, see \cite{lawford20}). Future work linking graphs and econometrics should also lead to a better understanding of the economic, strategic and spatial factors that drive dynamic clustering in real-world networks e.g. \cite{depaula17, depaula20, graham_depaula20} suggest possible applications in economics to game-theoretic network formation models and production networks and, in social networks, to the coordination of clustered individuals on collective actions.

\newpage

\appendix

\setcounter{table}{0}
\renewcommand{\thetable}{B.\arabic{table}}

\counterwithin{figure}{section}

\section{Proofs}\label{sec:proofs}
Derivations for the 3-star $M_{3}^{(3)}$, the triangle $M_{7}^{(3)}$, the 4-star $M_{11}^{(4)}$, the 4-path $M_{13}^{(4)}$, the tadpole $M_{15}^{(4)}$, and the 4-complete $M_{63}^{(4)}$ are given in \cite[Proposition 2.1]{agasse-duval_lawford18}. For completeness, we repeat the results here, without proof, in Proposition \ref{thm:analytic_count}. We also include the three spanning trees on five nodes: the 5-star $M_{75}^{(5)}$, the 5-arrow $M_{77}^{(5)}$ and the 5-path $M_{86}^{(5)}$, as well as the 5-complete $M_{1023}^{(5)}$, all with their corresponding proofs.

\begin{proposition}[Analytic formulae for nested subgraph enumeration]\label{thm:analytic_count}
\begin{equation*}\label{eq:m_3_3}
|M_{3}^{(3)}| = \sum_{i}\dbinom{k_{i}}{2} = \frac{1}{2} \, \sum_{i}k_{i}(k_{i}-1).
\end{equation*}
\begin{equation}\label{eq:m_7_3}
|M_{7}^{(3)}| = \frac{1}{6} \, \tr \, (g^{3}).
\end{equation}
\begin{equation*}\label{eq:m_11_4}
|M_{11}^{(4)}| = \sum_{i}\dbinom{k_{i}}{3} = \frac{1}{6} \, \sum_{i}k_{i}(k_{i}-1)(k_{i}-2).
\end{equation*}
\begin{equation*}\label{eq:m_13_4}
|M_{13}^{(4)}| = \sum_{(i, j) \in E}(k_{i}-1)(k_{j}-1) - 3 \, |M_{7}^{(3)}|.
\end{equation*}
\begin{equation*}\label{eq:m_15_4}
|M_{15}^{(4)}| = \frac{1}{2} \sum_{k_{i} > 2}(g^{3})_{ii} \, (k_{i}-2).
\end{equation*}
\begin{equation}\label{eq:m_63_4}
|M_{63}^{(4)}| = \frac{1}{24} \sum_{i} \tr(g_{-i}^{3}).
\end{equation}
\begin{equation}\label{eq:m_75_5}
|M_{75}^{(5)}| = \sum_{i}\dbinom{k_{i}}{4} = \frac{1}{24} \, \sum_{i}k_{i}(k_{i}-1)(k_{i}-2)(k_{i}-3).
\end{equation}
\begin{equation}\label{eq:m_77_5}
|M_{77}^{(5)}| = \sum_{(i,j)^{\star} \in E} \dbinom{k_{i}-1}{2}(k_{j}-1) - 2 |M_{15}^{(4)}|.
\end{equation}
\begin{equation}\label{eq:m_86_5}
|M_{86}^{(5)}| = \frac{1}{2}\sum_{i \neq j} (g^{4})_{ij} - 2\,|M_{3}^{(3)}| - 9 \, |M_{7}^{(3)}| - 3 \, |M_{11}^{(4)}| - 2 \, |M_{13}^{(4)}| - 2 \, |M_{15}^{(4)}|.
\end{equation}
\begin{equation}\label{eq:m_1023_5}
|M_{1023}^{(5)}| = \frac{1}{5} \, \sum_{i} |M_{63}^{(4)} (g_{-i}) | = \frac{1}{120} \sum_{i} \sum_{j \in \Gamma_{G}(i)}\tr(((g_{-i})_{-j})^{3}) .
\end{equation}
\end{proposition}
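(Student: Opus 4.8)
\noindent\emph{Proof sketch.} The plan is to take the six identities \eqref{eq:m_3_3}--\eqref{eq:m_63_4} as already established in \citet{agasse-duval_lawford18} and to prove the four new identities \eqref{eq:m_75_5}--\eqref{eq:m_1023_5} by the same double-counting scheme: for each motif, write down an ``obvious'' selection count --- a sum of binomial coefficients in the degrees, or a sum of traces of cubes of principal submatrices of $g$ --- that enumerates every copy of the motif the correct number of times but also picks up a controlled collection of \emph{degenerate} configurations, in which two of the selected vertices coincide; then recognise those degenerate configurations as copies of strictly smaller, already-counted motifs and subtract them with the appropriate multiplicities.

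The clique- and star-type cases are short. For \eqref{eq:m_75_5}, the $5$-star has a unique degree-four vertex, so picking that centre $i$ together with any four of its neighbours produces each $5$-star exactly once and nothing else, whence $|M_{75}^{(5)}| = \sum_i \binom{k_i}{4}$ with no correction. For \eqref{eq:m_1023_5}, I would use that deleting a vertex from a $5$-clique leaves a $4$-clique, so that the number of $4$-cliques inside the neighbourhood matrix $g_{-i}$ equals the number of $5$-cliques through $i$; summing over $i$ counts each $5$-clique five times (the prefactor $\tfrac{1}{5}$), and substituting \eqref{eq:m_63_4} applied to $g_{-i}$ --- observing that $(g_{-i})_{-j}$ is the adjacency matrix of the common neighbourhood of $i$ and $j$ --- produces the stated trace expression. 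For \eqref{eq:m_77_5}, the $5$-arrow is the tree with degree sequence $(3,2,1,1,1)$; writing $i$ for the degree-three vertex and $j$ for the degree-two vertex orients its central edge, and picking an ordered adjacent pair $(i,j)$ (the starred sum), then an unordered pair of further neighbours of $i$, then a further neighbour of $j$, gives $\sum_{(i,j)^{\star}\in E}\binom{k_i-1}{2}(k_j-1)$ selections. A selection with five \emph{distinct} vertices is a $5$-arrow counted exactly once; the sole degeneracy is that the chosen neighbour of $j$ coincides with a chosen neighbour of $i$, in which case the four surviving vertices and edges form a tadpole, i.e.\ a copy of $M_{15}^{(4)}$, and a short check shows each tadpole arises in exactly two such ways, yielding the term $-2\,|M_{15}^{(4)}|$.

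The delicate identity is \eqref{eq:m_86_5}, and this is where I expect the real work to lie. Here $\sum_{i\neq j}(g^4)_{ij}$ counts length-four walks $v_0v_1v_2v_3v_4$ with $v_0\neq v_4$; such a walk on five distinct vertices is a directed $5$-path, and each $5$-path arises from exactly two of them, which accounts for the leading term $\tfrac{1}{2}\sum_{i\neq j}(g^4)_{ij}$. The hard part is a complete, non-redundant classification of the ``short'' walks with distinct endpoints --- those using only three or four distinct vertices --- together with the number of length-four walks each embedded configuration contributes. One argues that the only connected graphs that can arise as the set of edges actually traversed by such a walk are the $3$-star $M_3^{(3)}$, the triangle $M_7^{(3)}$, the $4$-star $M_{11}^{(4)}$, the $4$-path $M_{13}^{(4)}$ and the tadpole $M_{15}^{(4)}$ --- no graph with five or more edges can occur since the walk has only four steps, and the $4$-cycle cannot occur because an Eulerian-type traversal of it returns to its start, forcing $v_0 = v_4$. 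One then counts, for a fixed copy of each, the length-four distinct-endpoint walks traversing exactly its edges, obtaining multiplicities $4$, $18$, $6$, $4$, $4$ respectively; subtracting these from $\sum_{i\neq j}(g^4)_{ij}$ and halving yields \eqref{eq:m_86_5}. The genuine obstacle is precisely this bookkeeping: ensuring that no walk shape is omitted and that no copy of a small motif is charged to two different shapes. To guard against slips I would verify the finished formula on the complete graphs $K_3$, $K_4$ and $K_5$, where both sides can be computed in closed form and $|M_{86}^{(5)}|$ equals $0$, $0$ and $60$. Assembling \eqref{eq:m_75_5}--\eqref{eq:m_1023_5} with the quoted formulae \eqref{eq:m_3_3}--\eqref{eq:m_63_4} then completes Proposition~\ref{thm:analytic_count}.
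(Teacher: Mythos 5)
Your proposal follows essentially the same route as the paper's own proof: the first six identities are imported from \citet{agasse-duval_lawford18}, the 5-star and 5-clique counts are obtained by the same centre-selection and neighbourhood-recursion arguments, the 5-arrow by the same central-edge double count with a $2\,|M_{15}^{(4)}|$ tadpole correction, and the 5-path by the same classification of length-four walks with distinct endpoints over the five small motifs (your pre-halving multiplicities $4,18,6,4,4$ are exactly twice the paper's post-halving table entries $2,9,3,2,2$, so the two bookkeepings agree). The only additions beyond the paper are your explicit exclusion of the 4-cycle and the sanity checks on $K_3$, $K_4$, $K_5$, which are consistent with the stated formulae.
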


\begin{remark}
In (\ref{eq:m_63_4}), $g_{-i}$ is the adjacency matrix corresponding to the subgraph induced by the neighbourhood $\Gamma_{G}(i)$ of $i$, which we denote by $G_{-i} = (V(\Gamma_{G}(i)), E(\Gamma_{G}(i)))$, and we use (\ref{eq:m_7_3}) to count the number of triangles.
\end{remark}

\begin{remark}
In (\ref{eq:m_77_5}), $\sum_{(i,j)^{*} \in E}$ denotes summation over all edges in $E$, in \emph{both} directions $(i,j)$ and $(j,i)$.
\end{remark}

\begin{remark}
In (\ref{eq:m_1023_5}), $(g_{-i})_{-j}$ is the adjacency matrix corresponding to the subgraph induced by the neighbourhood $\Gamma_{G-i}(j)$ of $j$, which we denote by $G_{-i-j} = (V(\Gamma_{G-i}(j)), E(\Gamma_{G-i}(j)))$, and we use (\ref{eq:m_63_4}) to count the number of 4-cliques.
\end{remark}
\begin{proof}[Proof of Proposition \ref{thm:analytic_count}] We treat each subgraph separately, and only report proofs that are not presented in \cite[Proposition 2.1]{agasse-duval_lawford18}.

\begin{enumerate}[label=(\alph*)]

    \item $|M_{75}^{(5)}|$: Node $i$ has edges to $k_{i}$ neighbours, and any four of those edges will form a 5-star, centered on $i$. The result (\ref{eq:m_75_5}) follows immediately.
    
    \item $|M_{77}^{(5)}|$: The method of proof is similar to that used for the count of the nested 4-path $|M_{13}^{(4)}|$ in \cite{agasse-duval_lawford18}. Consider any edge $(i, j) \in E$, as the central edge in a 5-arrow. Let $i$ and $j$ have degrees three and two respectively, and let node $i$ be directly-connected to nodes $x$ and $z$, and let node $j$ be directly-connected to node $y$. Node $i$ has $k_{i}-1$ possible neighbours (for nodes $x$ and $z$) and node $j$ has $k_{j} - 1$ possible neighbours (for node $y$). There are $\frac{1}{2}(k_{i}-1)(k_{i}-2)(k_{j}-1)$ ways in which two neighbours of $i$ can be paired with a neighbour of $j$, which gives a total of $\sum_{(i,j)^{\star} \in E} \dbinom{k_{i}-1}{2}(k_{j}-1)$ across all possible central edges, in both directions (we use $(i,j)^{\star}$ to denote ``$(i,j)$ and $(j,i)$''). This sum includes the unwanted cases $x = y$ and $y=z$, both of which form a tadpole. Since two of the four edges of the tadpole can be a candidate central edge $(i, j)$ of a 5-arrow, we subtract $2 \, |M_{15}^{(4)}|$ to give result (\ref{eq:m_77_5}).
    
    \item $|M_{86}^{(5)}|$: A very similar but less transparent proof can be found in \cite{movarraei_shikare14}. A 5-path is a walk of length 4 with no repeated nodes. Note that $\frac{1}{2}\sum_{i \neq j} (g^{4})_{ij} $ gives the number of walks of length 4 from $i$ to $j$, which does not only include 5-paths. There are five subgraphs in which we can find walks of length 4 that are not 5-paths:
 
 \begin{center}
    
      \begin{footnotesize}
                \begin{tabular*}{\linewidth}{@{\extracolsep{\fill}}l*{5}{c}}
                    \toprule
                    & \multicolumn{5}{c}{Subgraph}\\
                    \cline{2-6}
                    & 3-path & triangle & 4-star & 4-path & tadpole \\
                    
                    & $M_{3}^{(3)}$ & $M_{7}^{(3)}$ & $M_{11}^{(4)}$ & $M_{13}^{(4)}$ & $M_{15}^{(4)}$  \\
                    \midrule
                    Number of other walks of length 4 & 2 & 9 & 3 & 2 & 2 \\
                    \bottomrule
                \end{tabular*}
            \end{footnotesize}

 \end{center}

So, by removing them from the sum, we have (\ref{eq:m_86_5}) as required.
    
    \item $|M_{1023}^{(5)}|$: Consider a 4-complete subgraph $M_{63}^{(4)}$ comprised of nodes $j$, $k$, $\ell$ and $m$. Let each node be in the neighbourhood $\Gamma_{G}(i)$ of some node $i$ such that $i \ne j \ne k \ne \ell \ne m$. Hence, the five nodes $i$, $j$, $k$, $\ell$ and $m$, and the edges between them, form a 5-complete subgraph $M_{1023}^{(5)}$. The quantity $|M_{63}^{(4)} (g_{-i})|$ gives the number of 5-complete subgraphs that contain node $i$, where $g_{-i}$ is the adjacency matrix corresponding to the subgraph induced by $\Gamma_{G}(i)$. By symmetry, summing across all nodes $i$ will give five times the total count of 5-complete subgraphs in the graph, and so we divide the sum by five to give result (\ref{eq:m_1023_5}), which can be simplified further by using (\ref{eq:m_63_4}) to count 4-complete subgraphs in each subgraph $G_{-i}$.

\end{enumerate}

\end{proof}

\begin{proof}[Proof of Proposition \ref{thm:case_c(b)=1}]
We consider the ``if'' and ``only if'' parts separately:
\begin{itemize}

    \item (\textbf{if}) Let $G$ be complete. Hence, each set of $b$ nodes of $G$ forms a $b$-clique and $G$ contains exactly $\binom{n}{b}$ $b$-cliques. 
    The number of $b$-spanning trees of $G$ is equal to the number of $b$-spanning trees enclosed in any $b$-clique which is, using Cayley’s formula:
    \begin{equation*}\label{eq:cayleydenom}
    b^{b-2} \, \times \, \binom{n}{b},
    \end{equation*}
    from which (\ref{eq:c_b}) gives $C(b)=1$.

    \item (\textbf{only if}) We prove this part by contrapositive. Suppose that $G$ is not complete.
	Since $G$ has at least $b$ nodes, we can find a connected subgraph $G’$ of $G$ with $b$ nodes such that $G’$ is not a $b$-clique, and we can extract a $b$-spanning tree from $G’$ by removing any cycles. Hence, there is at least one $b$-spanning tree in $G$ which is not enclosed in a $b$-clique. It follows that:
	
    \begin{equation*}
        \begin{aligned}
            \textrm{number of $b$-spanning trees in $G$} & \geq \textrm{number of $b$-spanning trees enclosed in a $b$-clique} + 1 \\
            & > \textrm{number of $b$-spanning trees enclosed in a $b$-clique}\\
            & = b^{b-2} \times \textrm{number of $b$-cliques in $G$},
        \end{aligned}
    \end{equation*}
    and so $C(b) < 1$ from (\ref{eq:c_b}), which proves the proposition.
    \end{itemize}

\end{proof}

\newpage

\section{Additional Figures and Tables}\label{sec:correlations}

\begin{table}
\scriptsize
\begin{center}
\begin{tabular}{lrrclrrclrr}
code & x & y &  & code & x & y &  & code & x & y\\ \hline
 & & & & & & & & & & \\ 
ABQ & -106.61 & 35.04 &  & ACY & -74.58 & 39.47 &  & ALB & -73.80 & 42.73\\ 
AMA & -101.71 & 35.23 &  & ATL & -84.43 & 33.64 &  & AUS & -97.67 & 30.19\\ 
AZA & -111.66 & 33.31 &  & BDL & -72.68 & 41.94 &  & BHM & -86.75 & 33.57\\ 
BIL & -108.53 & 45.80 &  & BIS & -100.75 & 46.78 &  & BKG & -93.20 & 36.53\\ 
BLI & -122.53 & 48.80 &  & BMI & -88.92 & 40.48 &  & BNA & -86.68 & 36.12\\ 
BOI & -116.22 & 43.56 &  & BOS & -71.00 & 42.36 &  & BTV & -73.15 & 44.47\\ 
BUF & -78.73 & 42.94 &  & BUR & -118.35 & 34.20 &  & BWI & -76.67 & 39.18\\ 
BZN & -111.15 & 45.78 &  & CAK & -81.44 & 40.92 &  & CHS & -80.04 & 32.90\\ 
CID & -91.71 & 41.88 &  & CLE & -81.85 & 41.42 &  & CLT & -80.93 & 35.22\\ 
CMH & -82.88 & 40.00 &  & COS & -104.72 & 38.82 &  & CRP & -97.50 & 27.77\\ 
CVG & -84.67 & 39.05 &  & DAB & -81.05 & 29.18 &  & DAL & -96.85 & 32.85\\ 
DAY & -84.18 & 39.75 &  & DCA & -77.04 & 38.85 &  & DEN & -104.67 & 39.86\\ 
DFW & -97.04 & 32.90 &  & DSM & -93.66 & 41.53 &  & DTW & -83.35 & 42.21\\ 
ECP & -85.80 & 30.36 &  & EGE & -106.92 & 39.63 &  & ELP & -106.38 & 31.80\\ 
EUG & -123.22 & 44.12 &  & EWR & -74.17 & 40.69 &  & EYW & -81.77 & 24.55\\ 
FAR & -96.82 & 46.92 &  & FAT & -119.72 & 36.77 &  & FLL & -80.15 & 26.07\\ 
FNT & -83.74 & 42.97 &  & FSD & -96.74 & 43.58 &  & GEG & -117.53 & 47.62\\ 
GRR & -85.53 & 42.88 &  & GSO & -79.94 & 36.10 &  & GSP & -82.22 & 34.90\\ 
HOU & -95.28 & 29.65 &  & HPN & -73.70 & 41.07 &  & HRL & -97.75 & 26.20\\ 
HSV & -86.78 & 34.64 &  & IAD & -77.46 & 38.94 &  & IAG & -79.03 & 43.10\\ 
IAH & -95.34 & 29.98 &  & ICT & -97.43 & 37.65 &  & ILG & -75.60 & 39.68\\ 
IND & -86.29 & 39.72 &  & ISP & -73.10 & 40.80 &  & JAN & -90.08 & 32.31\\ 
JAX & -81.63 & 30.42 &  & JFK & -73.78 & 40.63 &  & LAN & -84.58 & 42.78\\ 
LAS & -115.17 & 36.08 &  & LAX & -118.41 & 33.94 &  & LBB & -101.83 & 33.67\\ 
LBE & -79.40 & 40.28 &  & LGA & -73.87 & 40.77 &  & LGB & -118.15 & 33.82\\ 
LIT & -92.22 & 34.73 &  & MAF & -102.20 & 31.94 &  & MCI & -94.73 & 39.29\\ 
MCO & -81.31 & 28.43 &  & MDT & -76.76 & 40.19 &  & MDW & -87.75 & 41.78\\ 
MEM & -89.97 & 35.07 &  & MFE & -98.24 & 26.18 &  & MHT & -71.44 & 42.93\\ 
MIA & -80.27 & 25.78 &  & MKE & -87.90 & 42.95 &  & MLB & -80.63 & 28.10\\ 
MOT & -101.28 & 48.27 &  & MSN & -89.34 & 43.14 &  & MSP & -93.22 & 44.88\\ 
MSY & -90.26 & 29.99 &  & MYR & -78.97 & 33.70 &  & OAK & -122.22 & 37.72\\ 
OKC & -97.60 & 35.39 &  & OMA & -95.90 & 41.30 &  & ONT & -117.60 & 34.06\\ 
ORD & -87.90 & 41.98 &  & ORF & -76.20 & 36.90 &  & ORH & -71.88 & 42.27\\ 
PBG & -73.47 & 44.65 &  & PBI & -80.10 & 26.68 &  & PDX & -122.60 & 45.59\\ 
PHF & -76.50 & 37.13 &  & PHL & -75.24 & 39.87 &  & PHX & -112.03 & 33.43\\ 
PIT & -80.23 & 40.49 &  & PNS & -87.18 & 30.47 &  & PSP & -116.50 & 33.83\\ 
PVD & -71.43 & 41.73 &  & PWM & -70.30 & 43.65 &  & RDU & -78.79 & 35.88\\ 
RIC & -77.32 & 37.50 &  & RNO & -119.77 & 39.50 &  & ROC & -77.67 & 43.12\\ 
RSW & -81.76 & 26.54 &  & SAN & -117.18 & 32.73 &  & SAT & -98.47 & 29.53\\ 
SAV & -81.20 & 32.13 &  & SBA & -119.84 & 33.43 &  & SDF & -85.74 & 38.17\\ 
SEA & -122.31 & 47.45 &  & SFO & -122.38 & 37.62 &  & SJC & -121.92 & 37.35\\ 
SLC & -111.97 & 40.79 &  & SMF & -121.62 & 38.70 &  & SNA & -117.87 & 33.67\\ 
SRQ & -82.55 & 27.40 &  & STL & -90.37 & 38.75 &  & SWF & -74.02 & 41.50\\ 
SYR & -76.12 & 43.12 &  & TPA & -82.53 & 27.98 &  & TTN & -74.81 & 40.28\\ 
TUL & -95.89 & 36.20 &  & TUS & -110.94 & 32.12 &  & TYS & -83.92 & 35.95\\ 
\end{tabular}
\caption{List of airports by IATA code in 2013Q4, all carriers,                 with their longitude (x) and latitude (y). The identity of the airport corresponding to each                 IATA code can be found at http://www.iata.org/en/publications/directories/code-search/.}
\label{tab:list_of_airports_2013_4}
\end{center}
\end{table}

    \begin{center}
         \begin{table}[!ht]
         \begin{scriptsize}
            \begin{subtable}[b]{0.5\linewidth}
                \centering
                 \begin{tabular*}{\linewidth}{@{\extracolsep{\fill}}l*{9}{c}}
                            \cline{1-5}
                            Variable & $C(3)$ & $C(4)$ & $C(5)$ & $density$ \\
                            \cline{2-5}
                            $C(3)$  & 1.000 & 0.394 & $-0.012$ & 0.790 \\
                            $p$-value  & 0.000 & 0.002 & 0.927 & 0.000 \\
                            \cline{2-5}
                            $C(4)$  & - & 1.000 & 0.910 & 0.365\\
                            $p$-value  & - & 0.000 & 0.000 & 0.004 \\
                            \cline{2-5}
                            $C(5)$  & - & - & 1.000 & 0.039 \\
                            $p$-value  & - & - & 0.000 & 0.766 \\
                            \cline{2-5}
                            $density$  & - & - & - & 1.000 \\
                            $p$-value  & - & - & - & 0.000 \\
                            \cline{1-5}
                    \end{tabular*}
                    \caption{Southwest Airlines.}
                    \vspace{3mm}
              \end{subtable} 
              \begin{subtable}[b]{0.5\linewidth}
                \centering
                \begin{tabular*}{\linewidth}{@{\extracolsep{\fill}}l*{9}{c}}
                            \cline{1-5}
                            Variable & $C(3)$ & $C(4)$ & $C(5)$ & $density$ \\
                            \cline{2-5}
                            $C(3)$  & 1.000 & 0.992 & 0.963 & 0.864 \\
                            $p$-value  & 0.000 & 0.000 & 0.000 & 0.000 \\
                            \cline{2-5}
                            $C(4)$  & - & 1.000 & 0.984 & 0.852 \\
                            $p$-value  & - & 0.000 & 0.000 & 0.000 \\
                            \cline{2-5}
                            $C(5)$  & - & - & 1.000 & 0.861 \\
                            $p$-value  & - & - & 0.000 & 0.000 \\
                            \cline{2-5}
                            $density$  & - & - & - & 1.000 \\
                            $p$-value  & - & - & - & 0.000 \\
                            \cline{1-5}
                    \end{tabular*}
                    \caption{American Airlines.}
                    \vspace{3mm}
              \end{subtable}
               \begin{subtable}[b]{0.5\linewidth}
                \centering
                 \begin{tabular*}{\linewidth}{@{\extracolsep{\fill}}l*{9}{c}}
                            \cline{1-5}
                            Variable & $C(3)$ & $C(4)$ & $C(5)$ & $density$ \\
                            \cline{2-5}
                            $C(3)$  & 1.000 & 0.897 & 0.659 & 0.781 \\
                            $p$-value  & 0.000 & 0.000 & 0.000 & 0.000 \\
                            \cline{2-5}
                            $C(4)$  & - & 1.000 & 0.916 & 0.459\\
                            $p$-value  & - & 0.000 & 0.000 & 0.000 \\
                            \cline{2-5}
                            $C(5)$  & - & - & 1.000 & 0.102 \\
                            $p$-value  & - & - & 0.000 & 0.438 \\
                            \cline{2-5}
                            $density$  & - & - & - & 1.000 \\
                            $p$-value  & - & - & - & 0.000 \\
                            \cline{1-5}
                    \end{tabular*}
                    \caption{US Airways.}
                    \vspace{3mm}
              \end{subtable} 
               \begin{subtable}[b]{0.5\linewidth}
                \centering
                 \begin{tabular*}{\linewidth}{@{\extracolsep{\fill}}l*{9}{c}}
                            \cline{1-5}
                            Variable & $C(3)$ & $C(4)$ & $C(5)$ & $density$ \\
                            \cline{2-5}
                            $C(3)$  & 1.000 & 0.994 & 0.968 & 0.965 \\
                            $p$-value  & 0.000 & 0.000 & 0.000 & 0.000 \\
                            \cline{2-5}
                            $C(4)$  & - & 1.000 & 0.989 & 0.936\\
                            $p$-value  & - & 0.000 & 0.000 & 0.000 \\
                            \cline{2-5}
                            $C(5)$  & - & - & 1.000 & 0.887 \\
                            $p$-value  & - & - & 0.000 & 0.000 \\
                            \cline{2-5}
                            $density$  & - & - & - & 1.000 \\
                            $p$-value  & - & - & - & 0.000 \\
                            \cline{1-5}
                    \end{tabular*}
                    \caption{United Airlines.}
                    \vspace{3mm}
              \end{subtable} 
              \begin{subtable}[b]{0.5\linewidth}
                \centering
                 \begin{tabular*}{\linewidth}{@{\extracolsep{\fill}}l*{9}{c}}
                            \cline{1-5}
                            Variable & $C(3)$ & $C(4)$ & $C(5)$ & $density$ \\
                            \cline{2-5}
                            $C(3)$  & 1.000 & 0.844 & 0.439 & $-0.256$ \\
                            $p$-value  & 0.000 & 0.000 & 0.000 & 0.049 \\
                            \cline{2-5}
                            $C(4)$  & - & 1.000 & 0.709 & $-0.414$\\
                            $p$-value  & - & 0.000 & 0.000 & 0.001 \\
                            \cline{2-5}
                            $C(5)$  & - & - & 1.000 & $-0.450$ \\
                            $p$-value  & - & - & 0.000 & 0.000 \\
                            \cline{2-5}
                            $density$  & - & - & - & 1.000 \\
                            $p$-value  & - & - & - & 0.000 \\
                            \cline{1-5}
                    \end{tabular*}
                    \caption{Spirit Airlines.}
                    \vspace{3mm}
              \end{subtable} 
              \begin{subtable}[b]{0.5\linewidth}
                \centering
                 \begin{tabular*}{\linewidth}{@{\extracolsep{\fill}}l*{9}{c}}
                            \cline{1-5}
                            Variable & $C(3)$ & $C(4)$ & $C(5)$ & $density$ \\
                            \cline{2-5}
                            $C(3)$  & 1.000 & 0.854 & 0.333 & 0.539 \\
                            $p$-value  & 0.000 & 0.000 & 0.009 & 0.000 \\
                            \cline{2-5}
                            $C(4)$  & - & 1.000 & 0.682 & 0.156 \\
                            $p$-value  & - & 0.000 & 0.000 & 0.235 \\
                            \cline{2-5}
                            $C(5)$  & - & - & 1.000 & $-0.255$ \\
                            $p$-value  & - & - & 0.000 & 0.049 \\
                            \cline{2-5}
                            $density$  & - & - & - & 1.000 \\
                            $p$-value  & - & - & - & 0.000 \\
                            \cline{1-5}
                    \end{tabular*}
                    \caption{AirTran Airways.}
                    \vspace{3mm}
              \end{subtable} 
              \begin{subtable}[b]{0.5\linewidth}
                \centering
                 \begin{tabular*}{\linewidth}{@{\extracolsep{\fill}}l*{9}{c}}
                            \cline{1-5}
                            Variable & $C(3)$ & $C(4)$ & $C(5)$ & $density$ \\
                            \cline{2-5}
                            $C(3)$  & 1.000 & NA & NA & 0.840 \\
                            $p$-value  & 0.000 & NA & NA & 0.000 \\
                            \cline{2-5}
                            $C(4)$  & - & 1.000 & NA & NA \\
                            $p$-value  & - & 0.000 & NA & NA \\
                            \cline{2-5}
                            $C(5)$  & - & - & 1.000 & NA \\
                            $p$-value  & - & - & 0.000 & NA \\
                            \cline{2-5}
                            $density$  & - & - & - & 1.000 \\
                            $p$-value  & - & - & - & 0.000 \\
                            \cline{1-5}
                    \end{tabular*}
                    \caption{Alaska Airlines.}
                    \vspace{3mm}
              \end{subtable} 
              \begin{subtable}[b]{0.5\linewidth}
                \centering
                 \begin{tabular*}{\linewidth}{@{\extracolsep{\fill}}l*{9}{c}}
                            \cline{1-5}
                            Variable & $C(3)$ & $C(4)$ & $C(5)$ & $density$ \\
                            \cline{2-5}
                            $C(3)$  & 1.000 & 0.970 & 0.807 & 0.838 \\
                            $p$-value  & 0.000 & 0.000 & 0.000 & 0.000 \\
                            \cline{2-5}
                            $C(4)$  & - & 1.000 & 0.919 & 0.743 \\
                            $p$-value  & - & 0.000 & 0.000 & 0.000 \\
                            \cline{2-5}
                            $C(5)$  & - & - & 1.000 & 0.488 \\
                            $p$-value  & - & - & 0.000 & 0.000 \\
                            \cline{2-5}
                            $density$  & - & - & - & 1.000 \\
                            $p$-value  & - & - & - & 0.000 \\
                            \cline{1-5}
                    \end{tabular*}
                    \caption{Delta Air Lines.}
                    \vspace{3mm}
              \end{subtable} 
            \end{scriptsize}
        \caption{Pearson's correlation test for $C(3)$, $C(4)$, $C(5)$ and density, for different networks. The usual clustering coefficient $C(3)$ is often highly correlated with the network density, but this does not always carry over to generalized clustering $C(4)$ and $C(5)$ (Southwest Airlines, US Airways, Spirit Airlines, AirTran Airways). Furthermore, $C(b)$ is often highly correlated with $C(b-1)$ for $b=4, 5$, although there is less observed correlation between $C(3)$ and $C(5)$.}\label{tab:correlations}
        \end{table} 
        
    \end{center}

\clearpage

\begin{figure}\centering
    \begin{subfigure}[b]{0.75\linewidth}
            \centering
            \includegraphics[scale=0.45]{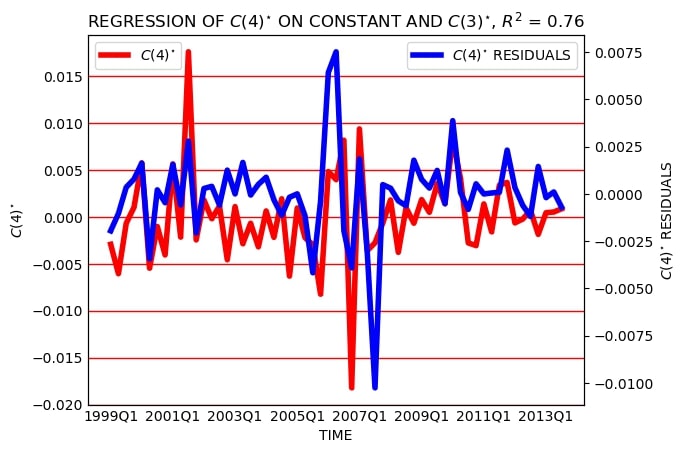}
            \caption{US Airways.}
            \vspace{3mm}
    \end{subfigure}
    \begin{subfigure}[b]{0.75\linewidth}
            \centering
            \includegraphics[scale=0.45]{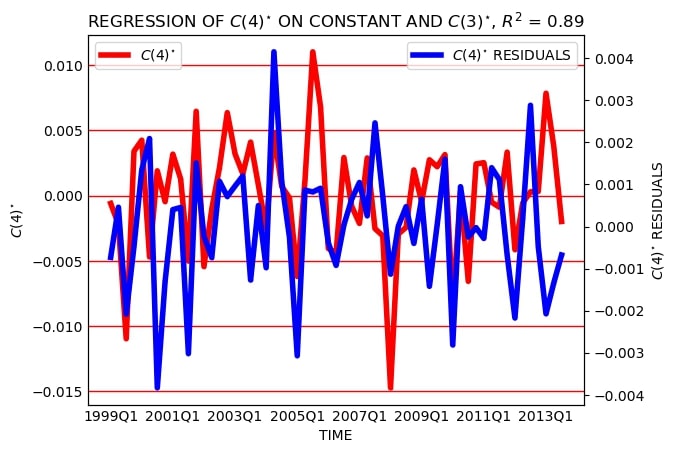}
            \caption{Southwest Airlines.}
            \vspace{3mm}
    \end{subfigure}
    \caption{Results of regressions of $C(4)^{\star}$ on a constant and $C(3)^{\star}$, for US Airways and Southwest Airlines, where the star notation indicates that both clustering coefficients have been corrected so that they are difference and trend stationary, before performing the regressions (see Section \ref{sec:test}): this is common practice in applied econometrics and avoids potential spurious results from regressing one nonstationary series on another. The figures suggest that there is little clustering of order four once we have corrected for the presence of usual clustering of order three.}
    \label{fig:correlation_correction}
\end{figure}

\begin{figure}\centering
    \includegraphics[scale=0.45]{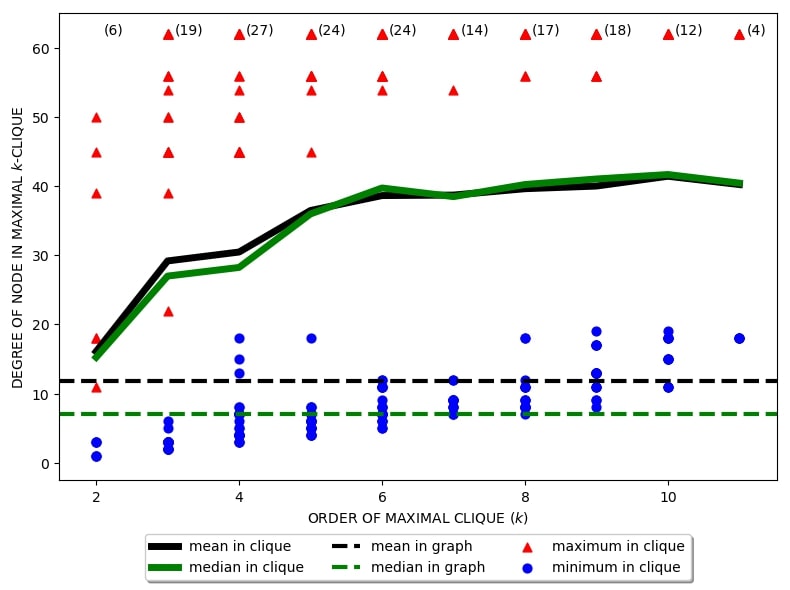}
    \caption{The mean (black line), median (green line), minimum (blue circle) and maximum (red triangle) degree of nodes that belong to maximal cliques of order $k$, in Southwest's 2013Q4 network. For comparison, we plot the mean (black dashed line) and median (green dashed line) degree of all nodes in the network. Values in parentheses are the total number of maximal cliques of order $k$ in the network. We see that airports that belong to maximal cliques are more connected, on average, than those that are not in a maximal clique, and that their degree increases in the number of airports in the clique.}\label{fig:degree_of_nodes_in_cliques_WN_2013_4}
\end{figure}

\begin{figure}\centering
              \begin{minipage}[b]{0.45\linewidth}
                \centering
                \includegraphics[scale=0.35]{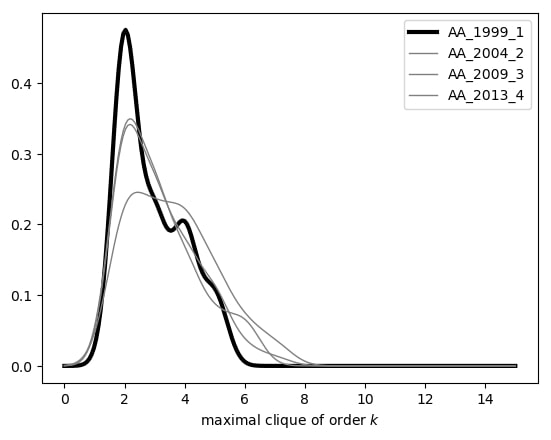}
              \end{minipage}
              \begin{minipage}[b]{0.45\linewidth}
                \centering
                \includegraphics[scale=0.35]{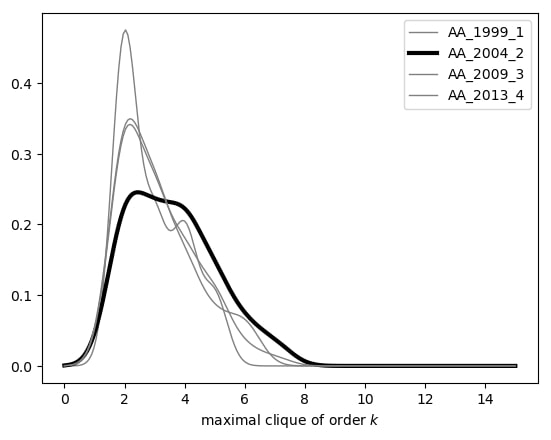}
              \end{minipage} 
              \begin{minipage}[b]{0.45\linewidth}
                \centering
                \includegraphics[scale=0.35]{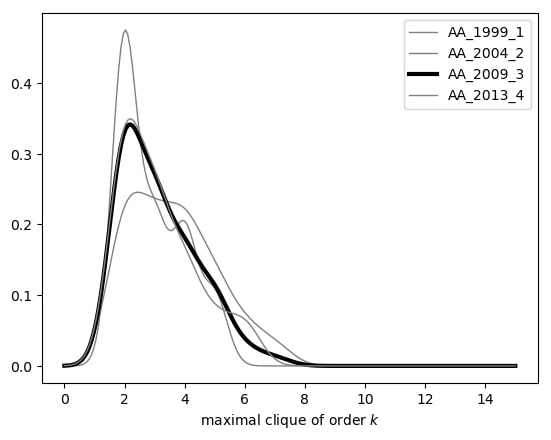}
              \end{minipage}
              \begin{minipage}[b]{0.45\linewidth}
                \centering
                \includegraphics[scale=0.35]{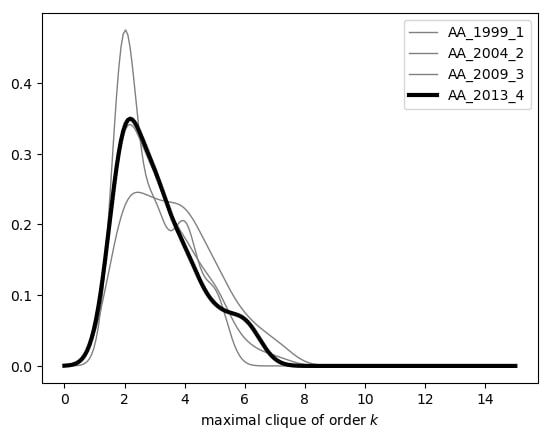}
              \end{minipage} 
              \caption{The distribution of maximal $k$-cliques in American's network, in 1999Q1, 2004Q2, 2009Q3 and 2013Q4. This shows that there is little variation over time in the number of groups of connected airports (of any size) in American's network. This is very different to what we observe for Southwest Airlines (Figure \ref{fig:kernel_plots_maximal_cliques_southwest}).}\label{fig:kernel_plots_maximal_cliques_american}
\end{figure}

\clearpage

\subsection{Computational performance of analytic formulae for $C(b)$}\label{sec:runtime}
We simulated the actual runtimes of the analytic formulae for $C(b)$ for $b=3,4,5$, on dense Erd{\H{o}}s-R{\'{e}}nyi graphs $G(n, 0.9)$, and compared these with the runtimes of a simple nested loop implementation. We are able to show that the theoretical asymptotic runtime of each of the analytic clustering formulae is lower than that of the nested loops.\footnote{The worst-case theoretical runtime of a nested loop implementation of $C(b)$ is $O(n^b)$, since there are $b$ nested loops. In a very sparse graph, the actual runtime of nested loops can be much faster, and coding shortcuts can take advantage of the fact that not every $b$-tuple needs to be considered. Directly from (\ref{eq:c_3_analytic}), (\ref{eq:c_4_analytic}) and (\ref{eq:c_5_analytic}), we can see that the numerator will dominate the asymptotic runtime of the analytic formulae. We find that $C(3)$ is $O(n^\omega)$, $C(4)$ is $O(n^{\omega+1})$, and $C(5)$ is $O(n^{\omega+2})$, where $\omega$ is the exponent of matrix multiplication, for which current implementations give $2.38 \leq \omega \leq 3$. The very fast matrix multiplication algorithms due to \cite{coppersmith_winograd90} and \cite{vassilevskawilliams14} both have $\omega \approx 2.38$, the well-known algorithm due to \cite{strassen69} has $\omega \approx 2.81$, and a na{\"i}ve algorithm has $\omega=3$.} However, the small-sample runtime is much lower when analytics are used (Figure \ref{fig:cb_runtimes}): the analytic algorithm is roughly 2,000 times faster for $C(3)$ and more than 500 times faster for $C(4)$ and $C(5)$ for the dense $G(n, p)$. While analytic runtime gains are lower for sparse $G(n, p)$, they remain very substantial, and this contributes to making these generalized clustering coefficients a practical tool for small graphs. We expect that numerical algorithms similar to those used in \cite{yin_etal18} will be more appropriate as the size of the graph increases.

\clearpage

\begin{figure}\centering
    \begin{subfigure}[b]{0.75\linewidth}
            \centering
            \includegraphics[scale=0.25]{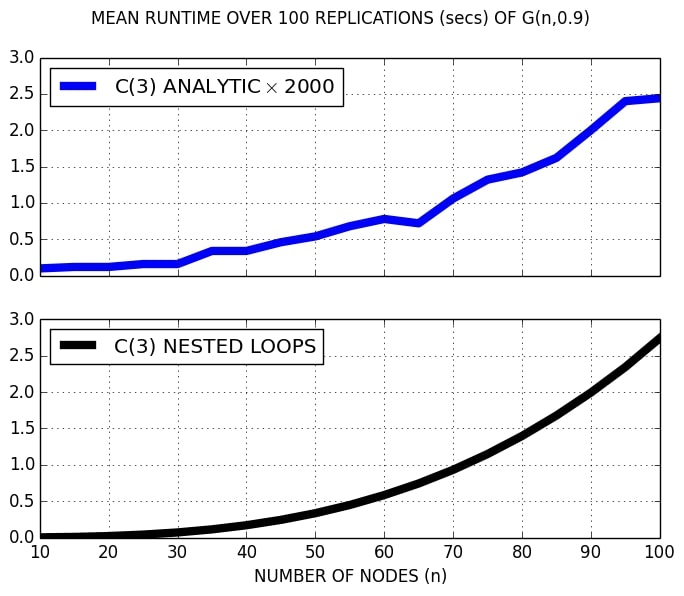}
            \caption{$C(3)$.}
            \vspace{3mm}
    \end{subfigure}
    \begin{subfigure}[b]{0.75\linewidth}
            \centering
            \includegraphics[scale=0.25]{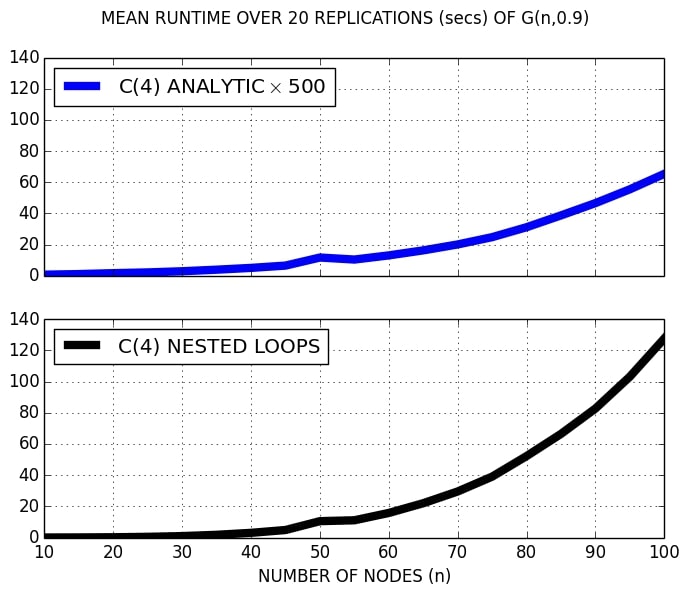}
            \caption{$C(4)$.}
            \vspace{3mm}
    \end{subfigure}
    \begin{subfigure}[b]{0.75\linewidth}
            \centering
            \includegraphics[scale=0.25]{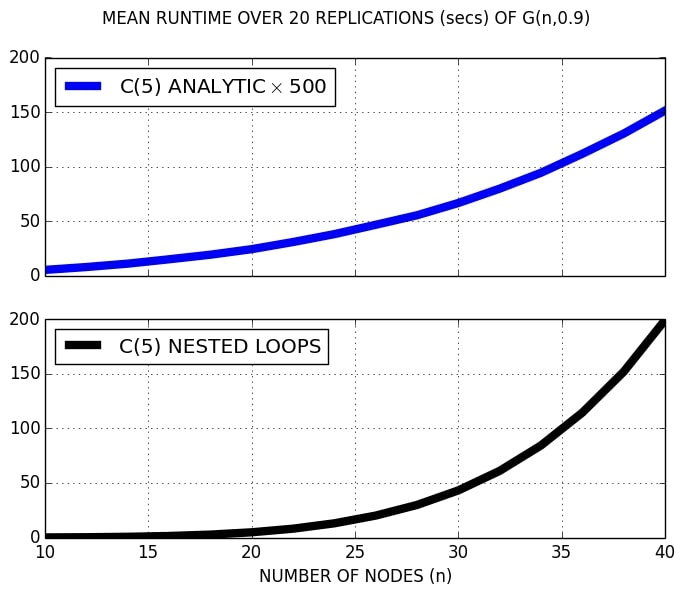}
            \caption{$C(5)$.}
            \vspace{3mm}
    \end{subfigure}
    \caption{Simulated runtimes, in seconds, of $C(3)$, $C(4)$ and $C(5)$ analytic and nested loop algorithms, computed over 100 (or 20 for $C(4)$ and $C(5)$) replications of dense Erd{\H{o}}s-R{\'{e}}nyi graphs $G(n, 0.9)$, where we only retain connected graphs. The analytic algorithm is roughly 2,000 times faster for usual clustering $C(3)$ and more than 500 times faster for generalized clustering $C(4)$ and $C(5)$.}
    \label{fig:cb_runtimes}
\end{figure}

\clearpage

\subsection{Supplementary results for the Yin-Benson-Leskovec statistic}

\begin{figure}[!htb]\centering
    \includegraphics[scale=0.45]{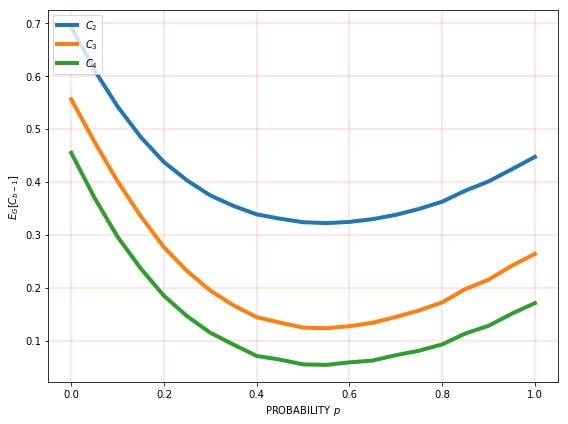}
    \caption{The simulated expected Yin-Benson-Leskovec clustering coefficient $\mathbb{E}_{G}[C_{b-1}]$ from 250 replications of a small-world graph with $n=50$ nodes, each of which has degree $2k=14$, and edge-rewiring probability $0 \leq p \leq 1$. See Figure \ref{fig:small_world_n50_k7_M200} for more details on the construction of the graph. As for $\mathbb{E}_{G}[C(b)]$ in Figure \ref{fig:small_world_n50_k7_M200}, we observe that expected clustering falls in $b$ but that it is not monotonic as $p$ increases.}\label{fig:small_world_n50_k7_M200_ybl}
\end{figure}

\begin{figure}[!htb]\centering
    \includegraphics[scale=0.45]{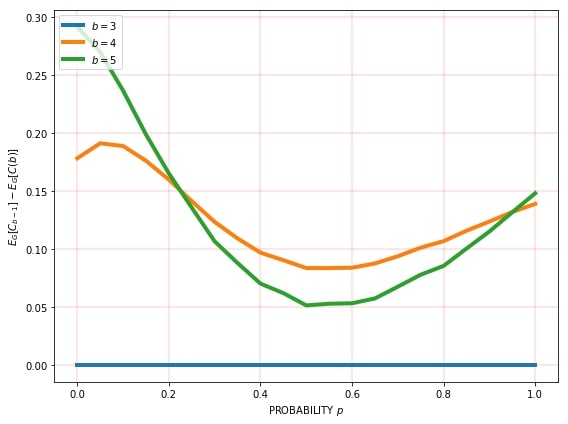}
    \caption{The simulated difference in expectation $\mathbb{E}_{G}[C_{b-1}] - \mathbb{E}_{G}[C(b)]$ between the Yin-Benson-Leskovec clustering coefficient $C_{b-1}$ and our coefficient $C(b)$ from 250 replications of the small-world graph of \cite{watts_strogatz98} with $n=50$ nodes, each of which has degree $2k=14$, and edge-rewiring probability $0 \leq p \leq 1$. See Figure \ref{fig:small_world_n50_k7_M200} for more details on the construction of the graph. We note that $C(3)=C_{2}$. We observe that the expected difference is positive and can be quite substantial, as for the Erd{\H{o}}s-R{\'{e}}nyi random graph $G(n, p)$ in Figure \ref{fig:clustering_difference}. Similarly to the simulated small-world expected clustering in Figure \ref{fig:small_world_n50_k7_M200} and Figure \ref{fig:small_world_n50_k7_M200_ybl}, the expected difference is not monotonic as $p$ increases.}\label{fig:clustering_difference_small_world}
\end{figure}

\clearpage

\begin{table}
\scriptsize
\begin{center}
\begin{tabular}{lrrccccccccccc}
Carrier & $C(3)$ & $C(3)_\mathrm{rand}$ & $C(4)$ & $C(4)_\mathrm{rand}$ & $C(5)$ & $C(5)_\mathrm{rand}$ & Connected \% & $C_3$ & $C_{3,\mathrm{rand}}$ & $C_4$ & $C_{4,\mathrm{rand}}$ & $C_3$ \% & $C_4$ \% \\ \hline
 & & & & & & & & & & & & & \\ 
AA & 0.120 & 0.060 & 0.018 & 0.000 & 0.002 & 0.000 & 45.3 & 0.101 & 0.003 & 0.075 & 0.000 & 100.0 & 5.7 \\ 
AS & 0.037 & 0.083 & 0.000 & 0.000 & 0.000 & 0.000 & 19.3 & 0.000 & 0.003 & NA & 0.000 & 95.4 & 1.7 \\ 
DL & 0.146 & 0.061 & 0.021 & 0.000 & 0.002 & 0.000 & 68.7 & 0.106 & 0.003 & 0.066 & 0.000 & 100.0 & 9.5 \\ 
FL & 0.154 & 0.108 & 0.008 & 0.001 & 0.000 & 0.000 & 62.7 & 0.038 & 0.007 & 0.000 & 0.000 & 100.0 & 10.4 \\ 
NK & 0.379 & 0.222 & 0.097 & 0.010 & 0.016 & 0.000 & 97.8 & 0.218 & 0.043 & 0.130 & 0.005 & 100.0 & 83.0 \\ 
UA & 0.346 & 0.138 & 0.122 & 0.003 & 0.034 & 0.000 & 95.6 & 0.259 & 0.017 & 0.189 & 0.001 & 100.0 & 65.6 \\ 
US & 0.115 & 0.067 & 0.011 & 0.000 & 0.000 & 0.000 & 34.0 & 0.054 & 0.003 & 0.000 & 0.000 & 100.0 & 4.7 \\ 
WN & 0.335 & 0.136 & 0.106 & 0.002 & 0.031 & 0.000 & 100.0 & 0.242 & 0.018 & 0.199 & 0.002 & 100.0 & 100.0 \\ 
\end{tabular}
\caption{Descriptive statistics for eight carrier networks in 2013Q4. The carriers are American Airlines (AA),                     Alaska Airlines (AS), Delta Air Lines (DL), AirTran Airways (FL), Spirit Airlines (NK), United Airlines (UA), US Airways (US),                     and Southwest Airlines (WN). We compare the clustering coefficients $C(3)$, $C(4)$ and $C(5)$ to realizations from Erd{\H{o}}s-R{\'{e}}nyi                     random graphs $G(n, p)$ with $n$ equal to the number of nodes in the observed network, and edge-formation probability $p$ equal to its density.                     Some of the random graphs are not connected (and Connected \% gives the percentage of connected realizations across all replications).                     Clustering coefficients are averaged over all 1000 realizations of $G(n, p)$, both connected and disconnected, as $C(3)_\mathrm{rand}$,                     $C(4)_\mathrm{rand}$ and $C(5)_\mathrm{rand}$. These columns are given in Table \ref{tab:network_descriptives}.                     We also report the Yin-Benson-Leskovec \cite{yin_etal18} clustering coefficients $C_3$ and $C_4$ (note that $C_2 = C(3)$ and this is                     not included separately). When $C_{b-1} = 0$, so that there are no $b$-cliques in the network, then there cannot be any $L(b, 1)$ lollipops and so the statistic $C_{b}$ will be undefined.                     In the empirical data, $C_4$ is not defined on the AS network. Conversely, $C(b)$ will always be defined on a connected graph.                     Both $C_3$ and $C_4$ detect higher-order clustering when it is present, and $C_3 > C_4$ for each network.                     There is also evidence that $C_3 \geq C(4)$ and $C_4 \geq C(5)$ whenever the Yin-Benson-Leskovec statistics are defined.                     The clustering coefficients $C_3$ and $C_4$ are averaged over all realizations of $G(n, p)$ for which the statistics are defined,                     and are reported as $C_{3,\mathrm{rand}}$ and $C_{4,\mathrm{rand}}$ (and $C_3$ \% and $C_4$ \% give the percentage of realizations                     for which each statistic is properly defined). We see that Yin-Benson-Leskovec clustering coefficients $C_3$ and $C_4$ are typically higher than random for all carriers,                     and that $C_4$ cannot be computed on many realizations of the sparse random graphs.}
\label{tab:network_descriptives_ybl}
\end{center}
\end{table}

\clearpage

\section*{Acknowledgements}
We are grateful to the editor Youjin Deng and two referees, whose comments greatly improved the paper, and to Chantal Roucolle and Tatiana Seregina. The usual caveat applies. This research did not receive any specific grant from funding agencies in the public, commercial, or not-for-profit sectors.\\
\\
Keywords: Airline network, Clique, Higher-order clustering, Graph theory, Subgraph.\\
\\
PACS numbers: 02.10.Ox (Combinatorics; graph theory), 89.40.Dd (Air transportation), 89.65.Gh (Economics; econophysics; financial markets; business and management), 89.75.-k (Complex systems).\\
\\
JEL classification: L14 (Transactional Relationships; Contracts and Reputation; Networks), L22 (Firm Organization and Market Structure), L93 (Air Transportation), C65 (Miscellaneous Mathematical Tools).

\clearpage

\singlespace

\bibliographystyle{plainnat}

\bibliography{ms}

\end{document}